\DeclarePairedDelimiter{\ceil}{\lceil}{\rceil}
\theoremstyle{definition}
	\newtheorem{definition}{Definition}
	\newtheorem{example}[definition]{Example}
	\theoremstyle{plain}
	\newtheorem{theorem}[definition]{Theorem}
	\newtheorem{prop}[definition]{Proposition}
  	\newtheorem{lemma}[definition]{Lemma}
 	\newtheorem{remark}[definition]{Remark}
\begin{document}

\title{Minimal percolating sets for mutating infectious diseases}

\author{Yuyuan Luo$^{a}$ and   Laura P. Schaposnik$^{b,c}$}

  \affiliation {(a)  Central High School, Grand Rapids, MI 49546, USA. \\
  (b)  University of Illinois, Chicago, IL 60607, USA.\\
    (c)  Mathematical Sciences Research Institute, Berkeley, CA 94720, USA.}

\begin{abstract}

This paper is dedicated to the study of the interaction between dynamical systems and percolation models, with views towards the study of viral infections whose virus mutate with time. Recall that $r$-bootstrap percolation describes a deterministic process where vertices of a graph are infected once $r$ neighbors of it are infected. We generalize this by introducing {\it $F(t)$-bootstrap percolation}, a time-dependent process where  the number of neighbouring vertices which need to be infected for a disease to be transmitted is determined by a percolation function $F(t)$ at each time $t$. After studying some of the basic properties of the model, we consider smallest percolating sets and construct a polynomial-timed algorithm to find one smallest minimal percolating set on finite trees for certain $F(t)$-bootstrap percolation models. \\

\end{abstract}

 \keywords{Bootstrap percolation, dynamical disease propagation, minimal percolating sets. }
\maketitle
 
\section{Introduction}
The   study infectious diseases though mathematical models dates back to 1766, where   Bernoulli developed a model to examine the mortality due to smallpox in England    \cite{modeling}. Moreover, the germ theory that describes the spreading of infectious diseases was first established in 1840 by  Henle and was further developed in the late 19th and early 20th centuries. This laid the groundwork for mathematical models as it explained the way that infectious diseases spread, which led to the rise of compartmental models. These models divide populations into compartments, where individuals in each compartment have the same characteristics; Ross first established one such model in 1911 in \cite{ross} to study malaria  and later on, basic compartmental models to study infectious diseases were established in a sequence of three papers by Kermack and McKendrick  \cite{kermack1927contribution} (see also \cite{epidemiology} and references therein). 

In these notes  we are interested in the interaction between dynamical systems and percolation models, with views towards the study of infections which mutate with time.  The use of stochastic models to study infectious diseases dates back to 1978 in work of  J.A.J. Metz   \cite{epidemiology}.
There are many ways to mathematically model infections, including statistical-based models such as regression models (e.g.~\cite{imai2015time}), cumulative sum charts (e.g.~\cite{chowell2018spatial}), hidden Markov models (e.g.~\cite{watkins2009disease}), and spatial models (e.g.~\cite{chowell2018spatial}), as well as mechanistic state-space models such as continuum models with differential equations (e.g.~\cite{greenhalgh2015disease}), stochastic models (e.g.~\cite{pipatsart2017stochastic}), complex network models (e.g.~\cite{ahmad2018analyzing}), and agent-based simulations (e.g.~\cite{hunter2019correction}  -- see also \cite{modeling} and references therein).

Difficulties when modeling infections include incorporating the dynamics of behavior in models, as it may be difficult to access the extent to which behaviors should be modeled explicitly, quantify changes in reporting behavior, as well as identifying the role of movement and travel   \cite{challenges}. When using data from multiple sources, difficulties may arise when determining how the evidence should be weighted and when handling dependence between datasets   \cite{challenges2}.

 In what follows we shall introduce a novel type of dynamical percolation which we call {\it $F(t)$-bootstrap percolation},   though a generalization of classical bootstrap percolation. This approach allows one to model mutating infections, and thus we dedicate this paper to the study some of its main features. After recalling classical $r$-bootstrap percolation in Section \ref{intro}, we introduce a percolating function $F(t)$ through which we introduce a dynamical aspect the percolating model, as described in Definition \ref{fperco}.
 \smallbreak
 
\noindent {\bf Definition.} Given a function $F(t): \mathbb{N}\rightarrow \mathbb{N}$, we define an {\em $F(t)$-bootstrap percolation model} on a graph $G$ with vertices $V$ and initially infected set $A_0$  as the process  which at time $t+1$ has infected set given by   
  \begin{eqnarray}A_{t+1} = A_{t} \cup \{v \in V  : |N(v) \cap A_t| \geq F(t)\}, \end{eqnarray} 
 where $N(v)$ denotes the set of neighbouring vertices to $v$, and we let    $A_\infty$ be the final set of infected vertices once the percolation process has finished. 
\smallbreak

In Section \ref{time} we study some basic properties of this model,   describe certain (recurrent) functions which ensure the model percolates, and study the critical probability $p_c$. Since our motivation comes partially from the study of effective vaccination programs which would allow to contain an epidemic, we are interested both in the percolating time of the model, as well as in minimal percolating sets. We study the former in Section \ref{time2}, where by considering equivalent functions to $F(t)$, we obtained bounds on the percolating time in Proposition \ref{propo8}. 
\smallbreak

Finally, in Section \ref{minimal} and Section \ref{minimal2} we introduce and study smallest minimal percolating sets for $F(t)$-bootstrap percolation on (non-regular) trees. This leads to one of our main results in Theorem \ref{teo1}, where we describe an algorithm for finding the smallest minimal percolating sets.  Lastly, we conclude the paper with a comparison in Section \ref{final} of our model and algorithm to the model and algorithm considered in \cite{percset} for clasical bootstrap percolation, and analyse the effect of taking different functions within our dynamical percolation. 

\newpage

\section{Background: bootstrap percolation and SIR models}\label{intro}

Bootstrap percolation was introduced in 1979 in the context of solid state physics in order to analyze diluted magnetic systems in which strong competition exists between exchange and crystal-field interactions   \cite{density}.
It has seen applications in the studies of fluid flow in porous areas, the orientational ordering process of magnetic alloys, as well as the failure of units in a structured collection of computer memory   \cite{applications}.

Bootstrap percolation has long been studied mathematically on finite and infinite rooted trees including Galton-Watson trees (e.g. see  \cite{MR3164766}).
It better simulates the effects of individual behavior and the spatial aspects of epidemic spreading, and better accounts for the effects of mixing patterns of individuals. Hence,  communicative diseases in which these factors have significant effects are better understood when analyzed with cellular automata models such as bootstrap percolation   \cite{automata}, which is defined as follows. 

\begin{definition}[Bootstrap percolation] For $n\in \mathbb{Z}^+$, we define an {\em $n$-bootstrap percolation model} on a graph $G$ with vertices $V$ and initially infected set $A_0$  as the process in which at time $t+1$ has infected set given by   
  \begin{eqnarray}A_{t+1} = A_{t} \cup \{v \in V  : |N(v) \cap A_t| \geq n\}. \end{eqnarray} 
 Here, as before, we denoted by $N(v)$ the set of neighbouring vertices to $v$.   
\end{definition}

In contrast, a {\it SIR Model} relates at each time $t$  the number of susceptible individuals $S(t)$ with  the number of infected individuals $I(t)$  and  the number of recovered individuals $R(t)$, by a system of differential equations -- an example of a SIR model   used  to simulate the spread of the dengue fever disease appears in  \cite{dengue}. 
The SIR models are very useful for simulating infectious diseases; however, compared to bootstrap percolation, SIR models do not account for individual behaviors and characteristics. In these models, a fixed parameter $\beta$ denotes the average number of transmissions from an infected node in a time period.
 
In what follows we shall present a dynamical generalization of the above model, for which it will be useful to have an example to establish the comparisons. 
 \begin{figure}[h!]
\includegraphics[scale=.26]{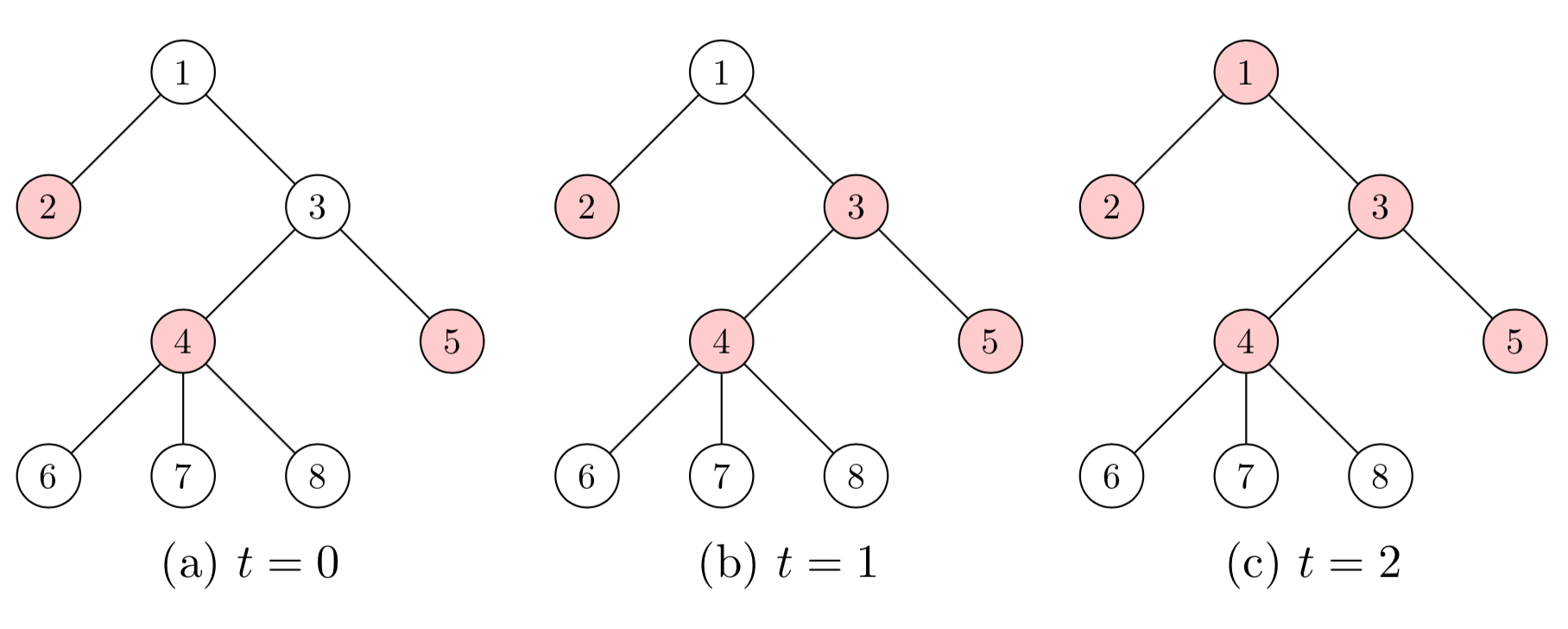}
\caption{Depiction of $2$-bootstrap percolation, where shaded vertices indicated infected nodes. }\label{first}
\end{figure}

Consider the  (irregular) tree  with three infected nodes at time $t=0$, given by $A_0=\{2,4,5\}$ as shown in Figure \ref{first}. 
Then,  through  $2$-bootstrap percolation at time $t=1$, node $3$ becomes infected because its neighbors $4$ and $5$ are infected at time $t=0$. At time $t=2$, node $1$ becomes  infected since its neighbors $2$ and $3$ are infected at time $t=1$. Finally, note that nodes $6,7,8$ cannot become infected because they each have only $1$ neighbor, yet two or more infected neighbors are required to become infected.

\section{Time-dependent Percolation }\label{time}

The motivation of time-dependent percolation models appears since   the rate of spread of diseases may change over time. In the SIR models mentioned before, since $\beta$ is   the average number of transmissions from an infected node in a time period, $1/\beta$ is the time it takes to infect a node. If we ``divide the work" among several neighbors, then $1/\beta$ is also the number of infected neighbors needed to infect the current node.  
Consider now an infection which would evolve with time. This is, instead of taking the same number of neighbours in $r$-bootstrap percolation, consider a percolation model where the number of neighbours required to be infected for the disease to propagate changes with time, following the behaviour of a  function $F(t)$ which can be set in terms of a one-parameter family of parameters $\beta$  to be $F(t) := \ceil[bigg]{\frac{1}{\beta(t)}}$. 
 We shall say a function is a  {\it percolation function} if it is a function $F: I \rightarrow \mathbb{Z}^+$ where $I$ is an initial segment of $\mathbb{N}$ that we use in a time-dependent percolation process, and which specifies the number of neighbors required to percolate to a node at time $t$. 
\begin{definition}[$F(t)$-Bootstrap percolation] \label{fperco}Given a function $F(t): \mathbb{N}\rightarrow \mathbb{N}$, we define an {\em $F(t)$-bootstrap percolation model} on a graph $G$ with vertices $V$ and initially infected set $A_0$  as the process in which at time $t+1$ has infected set given by   
  \begin{eqnarray}A_{t+1} = A_{t} \cup \{v \in V  : |N(v) \cap A_t| \geq F(t)\}. \end{eqnarray} 
 Here, as before, we denoted by $N(v)$ the set of neighbouring vertices to $v$, and we let    $A_\infty$ be the final set of infected vertices once the percolation process has finished. 
\end{definition}

\begin{remark}One should note that   $r$-bootstrap percolation can be recovered from $F(t)$-bootstrap percolation by setting  the percolation function to be the constant $F(t) = r$.
\end{remark}
 
It should be noted that, unless otherwise stated,  the initial set $A_0$ is chosen in the same way as in $r$-bootstrap percolation: by randomly selecting a set of initially infected vertices with probability $p$, for some fixed value of $p$ which is called the {\it probability of infection}.  
If there are multiple percolation functions and initially infected sets in question, we may use the notation $A^{F }_{t}$ to denote the set of infected nodes at time $t$ percolating under the function $F(t)$ with $A_0$ as the initially infected set. In particular, this would be the case when implementing the above dynamical model to a multi-type bootstrap percolation such as the one introduced in \cite{gossip}.  
In order to understand some basic properties of $F(t)$-bootstrap percolation, we shall first focus on a single update function $F(t)$, and consider   the critical probability   $p_c$   of infection for which the probability of percolation is $\frac{1}{2}$.

\begin{prop}\label{propo1}
If $F(t)$ equals its minimum for infinitely many times $t$, then the  critical probability  of infection $p_c$  for which the probability of percolation is 1/2, is given by the value of the critical probability in $m$-bootstrap percolation,  for $m:=\min_t F(t)$.\end{prop}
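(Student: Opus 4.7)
The plan is to prove the stronger claim that, under the stated hypothesis, for \emph{any} initial configuration $A_0 \subseteq V$ the final infected sets $A_\infty^F$ (under $F(t)$-bootstrap percolation) and $A_\infty^m$ (under $m$-bootstrap percolation) coincide. Since $p_c$ is defined via the probability (over a $p$-random $A_0$) that $A_\infty = V$, equality of the final sets for every outcome immediately yields equality of the two critical probabilities. So the whole task reduces to showing $A_\infty^F = A_\infty^m$ as subsets of $V$.

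The inclusion $A_\infty^F \subseteq A_\infty^m$ is the easy half. Because $F(t) \geq m$ for every $t$, one proves by induction on $t$ that $A_t^F \subseteq A_t^m$: if $v \in A_{t+1}^F$ then $|N(v) \cap A_t^F| \geq F(t) \geq m$, and by the inductive hypothesis this forces $|N(v) \cap A_t^m| \geq m$, so $v \in A_{t+1}^m$. Taking the union over $t$ gives the inclusion.

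The reverse inclusion $A_\infty^m \subseteq A_\infty^F$ is the step where the hypothesis that $F(t) = m$ for infinitely many $t$ is used. I would prove it by strong induction on the first step $k$ at which a vertex $v$ is added by the $m$-bootstrap process, i.e. $v \in A_k^m \setminus A_{k-1}^m$. The base case $k=0$ is immediate from $A_0^m = A_0 = A_0^F$. For the inductive step, pick $m$ distinct neighbors $u_1,\ldots,u_m$ of $v$ with each $u_i \in A_{k-1}^m$; by the induction hypothesis each $u_i$ lies in $A_\infty^F$, hence in $A_{s_i}^F$ for some finite $s_i$. Set $s_0 := \max_i s_i$; by monotonicity of the $F(t)$-process, $u_1,\ldots,u_m \in A_{s_0}^F$. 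Now invoke the hypothesis to choose some $s_1 \geq s_0$ with $F(s_1) = m$. Then $|N(v) \cap A_{s_1}^F| \geq m = F(s_1)$ and hence $v \in A_{s_1+1}^F \subseteq A_\infty^F$, closing the induction.

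The only real subtlety, which I would flag as the main obstacle, is exactly the use of the infinitely-often condition in that last step: without it, a vertex could accumulate its $m$ infected neighbors only at a moment when $F(t) > m$ and thereby be ``frozen out'' forever, so the two processes might disagree on $A_\infty$. The infinitely-often hypothesis guarantees an arbitrarily late window at which the update rule coincides with the $m$-bootstrap rule, which is precisely what the induction needs. Combining the two inclusions gives $A_\infty^F = A_\infty^m$ pointwise in the probability space, and therefore $p_c^F = p_c^m$.
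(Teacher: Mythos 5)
Your proposal is correct and follows essentially the same route as the paper: both prove the two inclusions $A_\infty^F \subseteq A_\infty^m$ (from $F(t)\geq m$) and $A_\infty^m \subseteq A_\infty^F$ (from the infinitely-often hypothesis) for every initial set, and then deduce equality of the critical probabilities. Your induction on the time a vertex joins the $m$-bootstrap process is just a more carefully spelled-out version of the paper's terse argument that the $F$-process catches up at the arbitrarily late times when $F(t)=m$.
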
 

\begin{proof}
 When considering classical bootstrap percolation, note that   the resulting set $A_\infty^r$ of $r$-bootstrap percolation is always contained by the resulting set $A_\infty^n$ of $n-$bootstrap percolation provided $n\leq r$. Hence, setting the value  $m:=\min_t F(t)$,    the resulting $A_\infty^F$ set of $F(t)$-bootstrap percolation will be contained in $A_\infty^m$. Moreover, since any vertex in $A_t^F$  for $t$ such that $F(t)=m$ remains in the set the next time for which $F(t)=m$, and since there are infinitely many times $t$ such that $F(t)=m$, we know that the final resulting set $A_\infty^m$ of $m$-bootstrap percolation is contained in the final resulting set $A_\infty^F$ of $F(t)$-bootstrap percolation. Then the resulting set of $m$-bootstrap percolation and $F(t)$-bootstrap percolation need to be identical, and hence the  critical probability for $F(t)$-bootstrap percolation is that of $m$-bootstrap percolation.
\end{proof}

As we shall see later, different choices of the one-parameter family $\beta(t)$ defining $F(t)$ will lead to very different dynamical models.  
A particular set up arises from  \cite{viral}, which provides data on the time-dependent rate of a specific virus spread, and through which  one has that an interesting family of parameters appears by setting \[\beta(t) = \left(b_0-b_f\right)\cdot\left(1-k\right)^t+b_f,\]   where $b_0$ is the initial rate of spread, $b_f$ is the final rate of spread, and $0<k<1$. Then at time $t$, the number of infected neighbors it takes to infect a node is \[F(t):=\ceil[Bigg]{\frac{1}{\left(b_0-b_f\right)\cdot\left(1-k\right)^t+b_f}}.\] 

In this case, since $\beta(t)$ tends to $b_f$, and $\frac{1}{\beta}$ tends to  $\frac{1}{b_f}$,  one cans see that  there will be infinitely many times $t$ such that $F(t) = \ceil[Bigg]{\frac{1}{b_f}}$. Hence, in this setting  from Proposition \ref{propo1},  the critical probability will be same as that of a $r$-bootstrap percolation where $r=\ceil[Bigg]{\frac{1}{b_f}}$.

\section{Percolation Time}\label{time2}

Informally, {\it percolation time} is the time it takes for the percolation process to terminate, with regards to a specific initially infected set of a graph. In terms of limits, recall that the final percolating set is defined as 
\begin{eqnarray}A_\infty:=\lim_{t\rightarrow \infty} A_t,\label{mas}\end{eqnarray}
and thus one may think of the percolation time as the smallest time $t$ for which $A_t=A_\infty$. By considering different initial probabilities of infection $p$ which determine the initially infected set $A_0$, and different percolation  functions $F(t)$ one can see that the percolation time of a model can vary drastically. To illustrate this, in Figure \ref{second} we have plotted the percentage of nodes infected with two different initial probabilities and four different percolation functions. The model was ran $10^3$ times for each combination on random graphs with $10^2$ nodes and $300$ edges.

 \begin{figure}[h!]
\includegraphics[scale=.4]{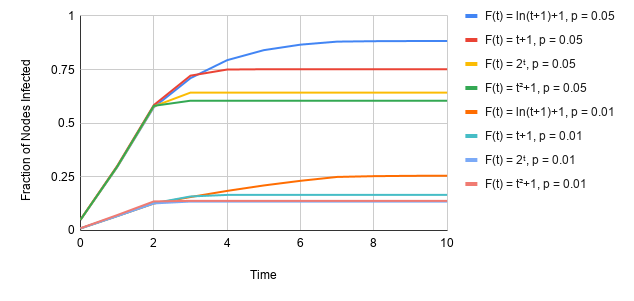}
\caption{ Percentage of nodes infected at time $t$ for $F(t)$-bootstrap percolation with initial probability $p$, on graphs with $100$ nodes and $300$ edges.}\label{second}
\end{figure}

 In the above settings of Figure \ref{second}, one can see that all the models stabilize by time $10$, implying that the percolation time is less than or equal to $10$.  Generally, understanding the percolation time is useful in determining when the disease spreading has stabilized. In what follows, we find a method to generate an upper bound on the percolation time given a specific graph and function. Formally, we define the 
  {\it percolation time} $t_*$ as the minimum 
  \[t_*:=\min_t \{~t~|~A_{t+1} = A_t~\}.\]   

Expanding on the notation of  \eqref{mas}, we shall denote by $A_\infty^\gamma$  the set of nodes infected by percolating the set $A_0$ on the graph with percolation function $\gamma(t)$, and we shall simply write $A_\infty$  when the percolation function $\gamma(t)$ is clear from context or irrelevant. Moreover, we shall say that 
two percolation functions $F_1: I_1 \rightarrow \mathbb{Z}^+$ and $F_2: I_2 \rightarrow \mathbb{Z}^+$   are   {\it equivalent} for the graph $G$ if for all initially infected sets $A_0$,  one has that \[A^{F_1}_\infty=A^{F_2}_\infty.\]
This equivalence relation can be understood through the lemma below, which uses an additional function $\gamma(t)$ to relate two percolation functions $F_0$ and $F_0'$ if $F_0'$ can be intuitively ``generated'' by removing some values of $F_0$. This removal procedure is further specified in this lemma.

Given two subsets $I_1$ and  $I_2$  of $\mathbb{N}$, we say a function   $\gamma: I_1 \rightarrow I_2 \cup \{-1\}$ is a {\it nice function} if it is surjective  and
\begin{itemize}
\item it is injective on $\gamma^{-1}(I_2)$;
\item it is increasing on $\gamma^{-1}(I_2)$;
\item it satisfies $\gamma(a) \leq a$ or $\gamma(a)=-1$.

\end{itemize}

\begin{lemma}
Given $I_1,I_2\subset \mathbb{N}$, let $F(t)$ be any percolation function with domain $I_1$, and define the percolation function $F'(t)$ with domain $I_2$  as $F'(t) := F(\gamma^{-1}(t))$ for $\gamma(t)$ a nice function. Then, for any fixed initially infected set $A_0$ and $t \in I_2$, one has that  \begin{eqnarray}A^{F'}_{t} \subseteq A^{F}_{\gamma^{-1}(t)}.\label{mas11}\end{eqnarray}
\end{lemma}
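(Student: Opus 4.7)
The plan is to prove the inclusion by induction on the time parameter $t \in I_2$, exploiting the fact that $\gamma^{-1}$ on $I_2$ lists a strictly increasing sequence in $I_1$, and that the bootstrap percolation process is monotone (once infected, a vertex stays infected, and enlarging the infected set can only enlarge the next infected set).

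First I would unpack the structure of a nice function $\gamma$. Writing $J := \gamma^{-1}(I_2) \subseteq I_1$, the surjectivity, injectivity, monotonicity, and the condition $\gamma(a) \leq a$ force $\gamma$ restricted to $J$ to be an order-preserving bijection onto $I_2$; moreover $\gamma^{-1}(t) \geq t$ for every $t \in I_2$. So if we enumerate $I_2 = \{0,1,2,\ldots\}$ and let $s_t := \gamma^{-1}(t)$, then $s_0 < s_1 < s_2 < \ldots$ with $s_t \geq t$ and $s_{t+1} \geq s_t + 1$, and by construction $F'(t) = F(s_t)$. Thus the $F'$-process is effectively the $F$-process restricted to the time steps $s_0, s_1, s_2, \ldots$, skipping the intermediate ones.

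Next I would carry out the induction, showing $A^{F'}_t \subseteq A^{F}_{s_t}$ for every $t \in I_2$. For the base case $t = 0$, we have $A^{F'}_0 = A_0 \subseteq A^{F}_{s_0}$ by monotonicity of the $F$-process. For the inductive step, assume $A^{F'}_t \subseteq A^{F}_{s_t}$. By the definition of $F(t)$-bootstrap percolation,
\begin{equation*}
A^{F'}_{t+1} = A^{F'}_{t} \cup \{v : |N(v) \cap A^{F'}_{t}| \geq F(s_t)\}.
\end{equation*}
Using the inductive hypothesis $A^{F'}_t \subseteq A^{F}_{s_t}$, any vertex $v$ meeting this threshold also satisfies $|N(v) \cap A^{F}_{s_t}| \geq F(s_t)$, and hence lies in $A^{F}_{s_t+1}$. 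Since $s_{t+1} \geq s_t + 1$, monotonicity gives $A^{F}_{s_t+1} \subseteq A^{F}_{s_{t+1}}$, and combining these inclusions yields $A^{F'}_{t+1} \subseteq A^{F}_{s_{t+1}}$, closing the induction.

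The step I expect to require the most care is not the combinatorial induction itself, which is essentially immediate once the right picture is in place, but rather translating the three bullet-point conditions of a nice function into the clean statement that $J$ is enumerated by $s_0 < s_1 < \cdots$ with $\gamma(s_t) = t$ and $s_t \geq t$. In particular, one must verify that the condition $\gamma(a) \leq a$ combined with monotonicity of $\gamma|_J$ actually forces $s_t \geq t$, which is what allows monotonicity of the $F$-percolation to absorb the gap between $s_t + 1$ and $s_{t+1}$. Once this dictionary between the axioms and the sequence $(s_t)$ is established, the rest of the argument is a straightforward monotonicity chase.
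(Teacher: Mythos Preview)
Your proof is correct and follows essentially the same inductive strategy as the paper's: both argue by induction on $t \in I_2$, using the key fact that $\gamma^{-1}(t) < \gamma^{-1}(t+1)$ and hence $\gamma^{-1}(t) \le \gamma^{-1}(t+1)-1$. Your presentation is a little cleaner --- you carry out the inductive step directly via monotonicity of the $F$-process, whereas the paper phrases the same step by contradiction, and your base case (using $A_0 \subseteq A^{F}_{s_0}$ rather than asserting $\gamma^{-1}(0)=0$) is more careful --- but the underlying argument is the same.
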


\begin{proof}
We first show that $F'(t)$ is well-defined. Since the domain of $F'(t)$ is $I_2$, we have that  $t\in I_2$ and thus $\gamma^{-1}(t)$ is a valid expression. Moreover,  $\gamma^{-1}(t)$ exists because $\gamma$ is surjective, and it is unique since $I_2$ is an initial segment of $\mathbb{N}$ and hence $t \neq -1$. Furthermore,  for any $a,b \in I_1$, if $\gamma(a) = \gamma(b) \neq -1$, then $a=b$. Since the domain of $\gamma$ is $I_1$, then $\gamma^{-1}(t) \in I_1$. This means that $\gamma^{-1}(t)$ is in the domain of $F(t)$ and thus one has that $F'(t)$ is defined for all $t\in I_2$.

We shall now  prove the result in the lemma by induction. Since $\gamma^{-1}(0)=0$ and the initially infected sets for the models with $F(t)$ and $F'(t)$ are the same, it must be true that $A^{F' }_{0} \subseteq A^{F }_{0}$, and in particular, $A^{F' }_{0} = A^{F }_{0} = A_0.$ In order to perform the inductive step, suppose that for some $t \in I_2$ and $t+1 \in I_2$, one has $A^{F' }_{t} \subseteq A^{F }_{\gamma^{-1}(t)}$. Moreover, suppose there is a node $n$ such that $n \in A^{F' }_{t+1}$ but $n \notin A^{F }_{\gamma^{-1}(t+1)}$. Then, this means that there exists a neighbor $n'$ of $n$ such that $n' \in A^{F' }_{t}$ but $n' \notin A^{F }_{\gamma^{-1}(t+1)-1}$. Indeed, otherwise this would imply  that the set of neighbors of $n$ infected prior to the specified times are the same for both models, and since $F'(t+1) = F(\gamma^{-1}(t+1))$ for $t \in I_2$, and thus $n$ would be infected in both or neither models. 
From the above, since $t < t+1$ one must have $\gamma^{-1}(t) < \gamma^{-1}(t+1)$, and thus   $$\gamma^{-1}(t) \leq \gamma^{-1}(t+1)-1.$$ Moreover, since $n' \notin A^{F }_{\gamma^{-1}(t+1)-1}$, then $n' \notin A^{F }_{\gamma^{-1}(t)}$. However, we assumed $n' \in A^{F' }_{t}$, and since $A^{F' }_{0} \subseteq A^{F }_{0}$, we have a contradiction, so it must be true that the sets satisfy $A^{F' }_{t+1} \subseteq A^{F }_{\gamma^{-1}(t+1)}$. Thus we have proven that for any initially infected set $A_0$ and $t \in I_2$, one has that \eqref{mas11} is satisfied for all $t\in I_2$.
\end{proof}
 Through the above lemma we can further understand when an $F(t)$-percolation process finishes in the following manner. 
\begin{lemma}
Given a percolation function   $F(t)$ and a fixed time $t \in \mathbb{N}$, let $t_p<t$ be such that $F(t_p) < F(t)$, and suppose there does not exist another time $t_i \in \mathbb{N}$ where $t_p < t_i <t$ such that $F(t_i) < F(t)$. Suppose further that we use this percolation function on a graph with $\ell$ vertices. Then, if  $|\{t_i~|~F(t_i)=F(t)\}|>\ell$, then there are no nodes that becomes infected at time $t$.
\end{lemma}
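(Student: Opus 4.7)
The plan is to set $m:=F(t)$ and to sandwich $A^F_t$ between the $m$-bootstrap closure of the snapshot $A^F_{t_p+1}$ from above and below. By hypothesis the threshold satisfies $F(s)\ge m$ for every $s\in(t_p,t]$, and the set $S:=\{s:t_p<s\le t,\ F(s)=m\}$ has cardinality $k>\ell$. Enumerating $S=\{s_1<s_2<\cdots<s_k\}$, the strict inequality $k>\ell$ gives $s_\ell+1\le s_{\ell+1}\le s_k\le t$. Let $\Phi_m(A):=A\cup\{v\in V:|N(v)\cap A|\ge m\}$ denote the one-step $m$-bootstrap operator on the graph; since the graph has $\ell$ vertices, its iterates stabilize after at most $\ell$ applications, so $\Phi_m^{\ell}(A^F_{t_p+1})=\Phi_m^{\infty}(A^F_{t_p+1})$.

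For the upper bound, I would prove by induction on $s\in[t_p+1,t]$ that $A^F_s\subseteq\Phi_m^{\infty}(A^F_{t_p+1})$. The base case $s=t_p+1$ is immediate. In the inductive step, any vertex $v$ added at the transition $s\to s+1$ satisfies $|N(v)\cap A^F_s|\ge F(s)\ge m$, since $s\in(t_p,t)$ lies in the regime where $F\ge m$; the inductive hypothesis places those $\ge m$ neighbors inside the $m$-closed set $\Phi_m^{\infty}(A^F_{t_p+1})$, forcing $v\in\Phi_m(\Phi_m^{\infty}(A^F_{t_p+1}))=\Phi_m^{\infty}(A^F_{t_p+1})$.

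For the lower bound, I would prove by induction on $i\ge 0$ that $A^F_{s_i+1}\supseteq\Phi_m^{i}(A^F_{t_p+1})$, with the convention $s_0:=t_p$ so that the base case holds with equality. For the inductive step, $F(s_i)=m$ gives the exact identity $A^F_{s_i+1}=\Phi_m(A^F_{s_i})$; monotonicity of the $F$-process yields $A^F_{s_i}\supseteq A^F_{s_{i-1}+1}\supseteq\Phi_m^{i-1}(A^F_{t_p+1})$, and applying the monotone operator $\Phi_m$ closes the induction. Specializing to $i=\ell$, together with $s_\ell+1\le t$ and the monotonicity of $A^F$, produces $A^F_t\supseteq A^F_{s_\ell+1}\supseteq\Phi_m^{\ell}(A^F_{t_p+1})=\Phi_m^{\infty}(A^F_{t_p+1})$. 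Combining with the upper bound gives $A^F_t=\Phi_m^{\infty}(A^F_{t_p+1})$, which is $m$-closed, and the transition $A^F_t\to A^F_{t+1}$ using threshold $F(t)=m$ therefore adds no new vertex.

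The principal obstacle is the choice of the correct base snapshot. The step $t_p\to t_p+1$ uses the smaller threshold $F(t_p)<m$ and can introduce vertices that need not lie in the $m$-closure of $A^F_{t_p}$; this forces the sandwich to be anchored at $A^F_{t_p+1}$ rather than $A^F_{t_p}$. Once this is accounted for, the bookkeeping is routine, and the strict inequality $k>\ell$ is used precisely to fit $\ell$ applications of $\Phi_m$ into the interval $[t_p+1,t]$, thereby reaching $m$-closure by time $t$.
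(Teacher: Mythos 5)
Your proof is correct, but it takes a genuinely different route from the paper's. The paper argues by contradiction with a counting/pigeonhole bookkeeping on infection events: assuming some node is infected at time $t$, it pairs each still-uninfected node with an ``idle'' time $t_j\in(t_p,t)$ at which $F(t_j)=F(t)$ and nothing is infected, matches each such idle time to a later time at which something is infected, and concludes that all $\ell$ vertices would already be infected before $t$, so nothing remains to infect at $t$. You instead prove a structural identity: writing $m=F(t)$ and $\Phi_m$ for the one-step constant-threshold operator, you sandwich the process to show $A^F_t=\Phi_m^{\infty}\bigl(A^F_{t_p+1}\bigr)$ --- the upper bound uses only $F(s)\ge m$ on $(t_p,t)$ and the fact that the closure is $\Phi_m$-stable, while the lower bound uses monotonicity of $\Phi_m$, the exact identity $A^F_{s+1}=\Phi_m(A^F_s)$ at each of the times where $F=m$, and stabilization of the iterates within $\ell$ steps; the hypothesis $k>\ell$ enters exactly to guarantee $\ell$ such times with $s_\ell+1\le t$. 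Since $A^F_t$ is then $m$-closed, the update at time $t$ (threshold $F(t)=m$) adds nothing. Your approach buys a stronger conclusion (it identifies $A^F_t$ as the $m$-bootstrap closure of the snapshot just after $t_p$, not merely that the step at $t$ is idle), avoids the contradiction structure and the matching bookkeeping that the paper leaves somewhat informal, and your anchoring at $A^F_{t_p+1}$ rather than $A^F_{t_p}$ correctly handles the smaller threshold at $t_p$; the paper's argument, in exchange, stays closer to the intuition that each occurrence of the minimal value either infects a new vertex or signals that the process has stalled. One small point of interpretation: you read ``no node becomes infected at time $t$'' as $A^F_{t+1}=A^F_t$, i.e.\ the update governed by $F(t)$; given that the hypothesis and the paper's subsequent use in Proposition \ref{P1} are phrased in terms of $F(t)$, this is the intended reading, and your proof matches it.
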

\begin{proof}
Suppose some node $n$ is infected at time $t$. 
Then, this would imply that all nodes are infected before time $t$. We can show this using contradiction: suppose there exists $m$   nodes $n_i$ that there are not infected by time $t$. Then we know that there exists at least $m$ of $t_j \in \mathbb{N}$ such that $t_p < t_j < t$, for which $F(t_j) = F(t)$ and such that there is no node infected at $t_j$. Matching each $n_i$ with some $t_j$ and letting $t_k \in \mathbb{N}$ be such that $t_j < t_k \leq t$, one can see that there is some node infected at $t_k$, and $F(t_k) = F(t)$. Moreover, this implies that there is no $t_x \in \mathbb{N}$ such that $t_j < t_x < t_k$ and such that there is some node infected at $t_x$ and $F(t_x) = a$. We know such a $t_k$ exists because there is a node infected at time $t$. 

From the above, for each $n_i$ there are two cases: either the set of nodes infected by $t_j$ is the same as the set of nodes infected by $t_k$, or there exists node $p$ in the set of nodes infected by $t_k$ but not in the set of nodes infected by its $t_j$. We have a contradiction for the first case: there must be a node infected at time $t_j$ is this is the case, as the set of infected nodes are the same as time $t_k$, so the first case is not possible. So the second case must hold for all $m$ of $n_i$'s. But then, the second case implies that there is a node infected between $t_j$ and $t_k$. This means that at least $m$ additional nodes are infected, adding to the at least $\ell-m$ nodes infected at $t_i$ such that $F(t_i) = a$ and there is a node infected at $t_i$, we have at least $\ell-m+m=\ell$ nodes infected before $t$.
But if all $\ell$ nodes are infected before $t$, this would mean there are no nodes to infect at time $t$, so $n$ does not exist.
\end{proof}

 Intuitively, the above lemma tells us that given a fixed time $t_0$ and some $t>t_0$, if $F(t) = \ell$ is the smallest value the function takes on after the time $t_0$, and $F(t)$ has already taken on that value more than $\ell$ times, for $\ell$ the number of nodes in the graph,  then there will be no nodes that will be infected at that time and the value is safe to be ``removed''. The removal process is clarified in the next proposition, where we  define an upper bound of percolation time on a specified tree and function $F(t)$.

\begin{prop} \label{propo8}Let  $G$ be a  regular tree of degree $d$  and $\ell$ vertices. Given a percolation function $F(t)$,  define the functions $F'(t)$ and $\gamma: \mathbb{N} \rightarrow \mathbb{N} \cup \{-1\}$ by setting:
\begin{itemize}
\item[(i)] $F'(0) := F(0)$, and $\gamma(0) := 0$.
\item[(ii)] Suppose the least value we have not considered $F(t)$ at is $a$, and let $b$ be the least value where $F'(b)$ has not yet been defined. If $F(a)$ has not yet appeared $\ell$ times since the last time $t$ such that $F(t) < F(a)$ and $F(a) \leq d$, then set $F'(b) := F(a)$, and let  $\gamma(a)=b$. Otherwise, $\gamma(a)=-1$.\end{itemize}
The function $F'(t)$ is equivalent to $F(t)$. \label{P1}
\end{prop}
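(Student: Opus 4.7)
The plan is to sandwich the equivalence between two containments, using the preceding lemma for one direction and a direct analysis of the ``removed'' times $a$ with $\gamma(a)=-1$ for the other.

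First, I would verify that $\gamma$ is a \emph{nice function} in the sense of the earlier lemma. The inductive recipe defining $\gamma$ assigns a new value of $b$ to $a$ exactly when $F'(b)$ is simultaneously defined, which makes $\gamma$ surjective onto $I_2\cup\{-1\}$; since each $b$ is used at most once, $\gamma$ is injective on $\gamma^{-1}(I_2)$; because $b$ is always chosen as the least index at which $F'$ is still undefined while $a$ is processed in increasing order, $\gamma$ is increasing on $\gamma^{-1}(I_2)$; and since $b$ counts the kept indices up to $a$, clearly $\gamma(a)\leq a$ whenever $\gamma(a)\neq -1$. The earlier lemma then immediately yields $A^{F'}_t\subseteq A^{F}_{\gamma^{-1}(t)}$ for every $t\in I_2$, and in particular $A^{F'}_\infty\subseteq A^{F}_\infty$.

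For the reverse containment I would prove by induction on $b\in I_2$ the strengthening $A^{F'}_b = A^{F}_{\gamma^{-1}(b)}$. The base case $b=0$ is trivial since the initially infected sets coincide. For the inductive step, using $F'(b)=F(\gamma^{-1}(b))$, a single $F'$-update from $A^{F'}_b$ reproduces exactly the $F$-update from $A^{F}_{\gamma^{-1}(b)}$ to $A^{F}_{\gamma^{-1}(b)+1}$. What remains is to show that the intermediate $F$-updates between indices $\gamma^{-1}(b)+1$ and $\gamma^{-1}(b+1)$ add no new vertices, since each of these indices is a removed time $a$ with $\gamma(a)=-1$. Exactly two reasons trigger this removal: either $F(a)>d$, in which case no vertex in the degree-$d$ tree can have $F(a)$ infected neighbors, so the step is vacuous; or the value $F(a)$ has already appeared $\ell$ times since the last $t$ with $F(t)<F(a)$, in which case the preceding counting lemma (with threshold $F(t)=F(a)$ and graph of $\ell$ vertices) directly prevents any new infection at time $a$. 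In both cases the $F$-process freezes, so the induction closes.

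The main obstacle will be the careful bookkeeping between the two time scales, specifically matching $F'(b)$ with $F(\gamma^{-1}(b))$ at the right update and invoking the counting lemma at precisely the correct threshold value. Once the induction is set up cleanly, letting $b\to\infty$ gives $A^{F'}_\infty = A^{F}_\infty$ for every choice of $A_0$, which is the required equivalence.
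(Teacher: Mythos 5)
Your proposal is correct and takes essentially the same route as the paper: an induction over the kept times establishing $A^{F'}_{b} = A^{F}_{\gamma^{-1}(b)}$, with the removed times shown to contribute no new infections via the preceding counting lemma. The only differences are cosmetic --- the paper's equality induction already yields both containments, so your separate appeal to the nice-function lemma is redundant, while your explicit treatment of the removed times with $F(a) > d$ (vacuous because no vertex of the degree-$d$ tree has that many neighbours) spells out a case the paper leaves implicit.
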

\begin{proof}
Intuitively,  the function $\gamma$ constructed above is mapping the index associated to $F(t)$ to the index associated to $F'(t)$. If omitted, then it is mapped to $-1$ by $\gamma$.
To prove the proposition, we will prove that $P_{F(t)}(A) = P_{F'(t)}(A)$. Suppose we have a node $n$ in $P_{F(t)}(A)$, and it is infected at time $t_0$. Suppose $F(t_0) = a$ for some $a \in \mathbb{Z}^+$, and let $t_{prev}$ be     the largest integer $t_{prev} < a$ such that $F(t_{prev}) < a$. Suppose further that $t_0$ is the $m$th instance such that $F(t) = a$ for some $t$. Moreover, if $m > v$, there cannot be any node infected at time $t_0$ under $F(t)$, and thus it follows that $m \leq v$. But if $m \leq v$, then $\gamma(t_0) \neq -1$ and therefore  all nodes that are infected under $F(t)$ became infected at some time $t$ where $\gamma(t_0) \neq -1$. 

Recall that  $A_0^{F} = A_0^{F'}$, and    suppose for some $n$ such that $\gamma(n)\neq -1$, one has that $A_n^{F} = A_{\gamma(n)}^{F'}$. We know that for any $n < t < \gamma^{-1}(\gamma(n)+1), \gamma(t) = -1$, so nothing would be infected under $F(t)$ after time $n$ but before $\gamma^{-1}(\gamma(n)+1)$. This means that the set of previously infected nodes at time  $\gamma^{-1}(\gamma(n)+1)-1$ is the same as the set of nodes infected before time $n$ leading to \[A_n^{F} = A_{\gamma^{-1}(\gamma(n)+1)-1}^{F'}.\] Then, since $F(\gamma^{-1}(\gamma(n)+1)) = F'(\gamma(n)+1)$ and the set of previously infected nodes for both are $A_n^{F}$, we know that $A_{n+1}^{F} = A_{\gamma(n+1)}^{F'}$. Thus, for any time $n'$ in the domain of $F'(t)$, there exist a corresponding time $n$ for percolation under $F(t)$ such that the infected set at time $n$ under $F(t)$ and the infected set at time $n'$ under $F'(t)$ are the same, and thus $A_\infty^{F} = A_\infty^{F'}$.
\end{proof}

From  the above Proposition \ref{P1} we can see two things: the upper bound on the percolation time is the time of the largest $t$ such that $F'(t)$ is defined, and we can use this function in an algorithm to find the smallest minimal percolating set since $F(t)$ and $F'(t)$ are equivalent.
Moreover, an upper bound on the percolation time can not be obtained without regards to the percolation function: suppose we have such an upper bound $b$ on some connected graph with degree $d$ and with $1$ node initially infected and more than $1$ node not initially infected. Then, if we have percolation function $F(t)$ such that $F(t) = d+1$ for all $t\in \mathbb{N} \leq b$ and $F(m)=1$ otherwise, we see that there will be nodes infected at time $b+1$, leading to a contradiction.

\begin{lemma} 
Suppose the degree of a graph is $d$. Define a sequence $a$ where $a_1 = d$ and $a_{n+1} = (a_n+1)d$. Then the size of the domain of $F'(t)$ in Proposition \ref{P1} is 
$\Sigma^{d}_{i=1}a_n$. \label{ll}
\end{lemma}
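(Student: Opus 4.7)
The plan is to bound, separately for each admissible value $n \in \{1,\ldots,d\}$, the number of entries of $F'$ equal to $n$, and then to sum those bounds. Since the construction of Proposition~\ref{P1} only retains values with $F(a)\leq d$, every entry of $F'$ lies in $\{1,2,\ldots,d\}$, so it suffices to show that value $n$ contributes at most $a_n$ entries; the desired total $\sum_{i=1}^{d}a_i$ will then follow by summing.

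I would proceed by induction on $n$, mirroring the recurrence $a_{n+1}=(a_n+1)\,d$. For the base case $n=1$, note that $1$ is the minimum possible value, so the counter for value $1$ in Proposition~\ref{P1} is never reset by a strict decrease; all $F'=1$ entries therefore belong to a single ``block''. The regular tree structure of degree $d$ (each retained $F=1$ step exposes at most $d$ new vertices to the infected boundary) together with the filtering rule of Proposition~\ref{P1} then bounds the retained $F'=1$ entries by $a_1=d$. For the inductive step, I would decompose the positions where $F'=n+1$ into blocks delimited by the strict decreases of $F$ below $n+1$. By the inductive hypothesis together with the observation that between two such decreases $F$ must assume a value less than $n+1$ that is retained in $F'$, the number of blocks is at most $a_n+1$ (one initial block plus one after each of the at most $a_n$ retained occurrences of the ``gating'' value). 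Inside each block, a local tree-structure argument parallel to the base case caps the retained $(n+1)$-entries at $d$, producing exactly the recurrence $a_{n+1}=(a_n+1)\,d$.

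Summing the per-value bounds yields $|\mathrm{dom}(F')|\leq a_1+a_2+\cdots+a_d=\sum_{i=1}^{d}a_i$, which is precisely the content of the lemma. The main obstacle I anticipate is establishing the per-block cap of $d$ retained entries: one has to show that on a regular tree of degree $d$, between two consecutive resets of the counter for value $n+1$, at most $d$ useful rounds at threshold $n+1$ can occur before either the percolation at that threshold stabilises or the filtering condition of Proposition~\ref{P1} excludes further entries. I would handle this by tracking, inside each block, how each retained $(n+1)$-step consumes a vertex of the ``ready'' boundary, and how that boundary can be replenished only through the $d$-fold branching of the tree, thereby producing the factor $d$ needed for the recurrence to close.
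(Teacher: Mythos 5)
Your counting scheme is the same as the paper's: bound, value by value, the number of retained entries of $F'$ equal to $n$, by induction, with occurrences of smaller retained values delimiting blocks, at most $a_n+1$ blocks, at most $d$ entries per block, hence the recurrence $a_{n+1}=(a_n+1)d$, and then sum over the values $1,\dots,d$. (You even inherit the same looseness as the paper in counting blocks only against the preceding value rather than against all retained smaller values.)

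The genuine gap is the step you yourself flag as the main obstacle: the per-block cap of $d$. You propose to establish it by a percolation-dynamics argument on the regular tree (each retained step ``consumes'' a vertex of a ready boundary which is replenished by $d$-fold branching). No such argument is available, and none is needed. The function $F'$ in Proposition \ref{P1} is defined by a purely combinatorial filtering rule on $F$ alone: an occurrence of a value is retained only if that value has appeared fewer than a fixed number of times since the last strictly smaller value (the threshold written $\ell$ there), and this is independent of any initially infected set and of how percolation actually unfolds. The per-block cap is therefore immediate from the construction, which is exactly how the paper's proof obtains it (``$1$ can maximally appear $d$ times''). Moreover, a cap tied to the graph degree cannot be derived dynamically: if $F\equiv 1$, the construction retains the first $\ell$ occurrences of the value $1$ no matter what the degree is, since no smaller value ever resets the count. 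The $d$ in the lemma is the paper's conflation of the appearance threshold of Proposition \ref{P1} with the degree (compare the remark following the lemma, where $d$ is referred to as the number of vertices), so the constant must be read off the filtering rule, not reconstructed from boundary growth on the tree. Replace your dynamic argument by this direct appeal to the construction and your induction becomes the paper's proof.
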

\begin{proof}
Suppose each value do appear exactly $d$ times after the last value smaller than it appears. To count how large the domain can be, we start with the possible $t$s such as $F'(t)=1$s in the function; there are $d$ of them as $1$ can maximally appear $d$ times. Note that this is equal to $a_1$. Now, suppose we have already counted all the possible $t$s when $F'(t) < n+1$, for $1 leq n < d$, which amounted to $a_{n}$. Then, there can be maximally $d$ instances at the between the appearance of each $t$ when $F'(t) < n$ as well as before and after all such appearances, so there are $a_{n}+1$ places where $F'(t)=n$ can appear. Thus there are maximally $(a_{n}+1)d$ elements $t$ in the domain such that $F'(t) = n+1$. Summing all of them yields $\Sigma^{d}_{i=1}a_n$, the total number of elements in the domain.
\end{proof}

\begin{remark}
From Proposition \ref{P1}, for some $F(t)$, $A_0$ and $n$, one has $A^{F}_{\gamma^{-1}(n)} = A^{F'}_{n}$. Then if $A_\infty^{F'}$ is reached by time $\Sigma^{d}_{i=1}a_n$, the set  must  be infected by  time $\gamma^{-1}(\Sigma^{d}_{i=1}a_n)$. Hence, in this setting an upper bound of $F(t)$ percolating on a graph with $d$ vertices can be found by  taking $\gamma^{-1}(\Sigma^{d}_{i=1}a_n)$, as defined in Lemma \ref{ll}.
\end{remark}

\section{Minimal Percolating Sets}\label{minimal}
 
  When considering percolations within a graph, it is of much interest to understand which subsets of vertices, when infected, would lead to the infection reaching the whole graph. 
 \begin{definition}
 A {\em percolating set} of a graph $G$ with percolation function $F(t)$ is a set $A_0$  for which   $A_\infty^F=G$  at a finite time. A {\em minimal percolating set} is a percolating set $A$ such that if    any node  is removed from $A$, it will no longer be a percolating set. 
\end{definition}

\begin{remark}A natural motivation for studying minimal percolating sets is that as long as we keep the number of individuals infected to less than the size of the minimal percolating set, we know that the entire population will not be decimated. 
\end{remark}
Bounds on minimal percolating sets on grids and other less regular graphs have extensively been studied.  For instance,  it has been shown  in \cite{Morris} that for a grid $[n]^d$, there exists a minimal percolating set of size $4n^2/33 + o(n^2)$, but there does not exist one larger than $(n + 2)^2/6$. In the case of trees, \cite{percset} gives an algorithm that finds the largest and smallest minimal percolating sets on trees. However, the results in the above papers cannot be easily extended to the dynamical model because it makes several assumptions such as $F(t) \neq 1$ that do not necessarily hold in the dynamical model.

\begin{example}\label{ex2}An example of a minimal percolating set with $F(t)=t$ can be seen in Figure \ref{ex1} (a). In this case, the minimal percolating set has size 3. Indeed, we see that if we take away any of the red nodes, the remaining initially infected red nodes would not percolate to the whole tree, and thus they form a minimal percolating set; further, there exists no minimal percolating sets of size 1 or 2, thus this is the smallest minimal percolating set.  It should be noted that minimal percolating sets can have different sizes. For example, another minimal percolating set with $5$ vertices appears in  Figure \ref{ex1} (b). 

\begin{figure}[!h]
\centering
\includegraphics[scale=.15]{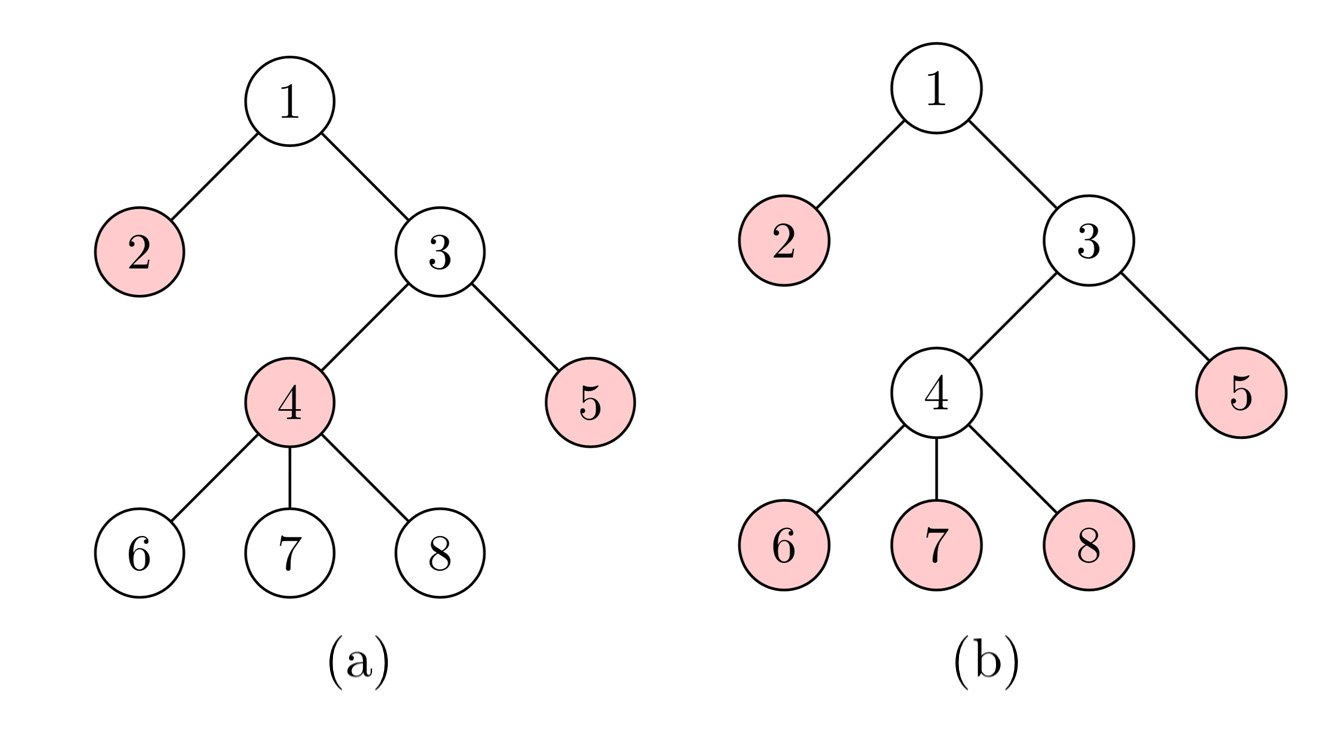}
\caption{(a) In this tree, having nodes $2,4,5$ infected (shaded in red) initially is sufficient to ensure that the whole tree is infected. (b) This minimal percolating set shaded in red is of size $5$.}\label{ex1}
\end{figure}

\end{example}

In what  follows we shall work with general finite trees $T(V,E)$ with set of vertices  $V$  and  set of edges $E$. In particular, we shall consider the smallest minimal percolating sets in the following section.

\section{Algorithms for Finding Smallest Minimal Percolating Set}\label{minimal2}

Consider $F(t)$-bootstrap percolation on a tree $T(V,E)$ with initially infected set $A_0\subset V$. As before, we shall denote by   $A_t$ be the set of nodes infected at time $t$.  
 For simplicity, we shall  use here the word ``infected'' synonymously with ``infected''. In order to build an algorithm to find smallest percolating sets, we first need to introduce a few definitions that will simplify the notation at later stages.

\begin{definition}
We shall denote by $L(a)$  the largest time $t$ such that $a \leq F(t),$ and if there does not exist such a time $t$, then set $L(a)=\infty$.  Similarly,  define $B(a)$ as  the smallest  time $t$ such that $a \leq F(t)$, and if such a time $t$ does not exist, set $B(a)=\infty$.
\end{definition}

Given  $a,b\in \mathbb{N}$, if   $a<b$ then $L(a) \geq L(b)$. 
 Indeed, this holds   because if a node can be infected to with $b$ neighbors, it can with $a$ neighbors where $a<b$. Note that in general, a smallest percolating set $A_0$ must be a minimal percolating set. To see this, suppose not. Then there exists some $v$ in $A_0$ such that $A_0 -\{v\}$ percolates the graph. That means that $A_0 -\{v\}$, a smaller set that $A_0$, is a percolating set. However, since $A_0$ is a smallest percolating set, we have a contradiction.
 Hence, showing that a percolating set $A_0$ is the smallest implies that $A_0$ is a minimal percolating set.

\begin{remark}
The first algorithm that comes to mind is to try every case. There are $2^n$ possible sets $A_0$, and for each set we much percolate $A_0$ on $T$ to find the smallest percolating set. This amounts to an algorithm of complexity $O(t2^n)$ where $t$ is the upper bound on the percolation time.
\end{remark}
In what follows we shall describe a polynomial-timed algorithm to find the smallest minimal percolating set on $T(V,E)$, described in  Theorem \ref{teorema}. For this, we shall introduce two particular times associated to each vertex in the graph, and formally define what isolated vertices are.

\begin{definition}
For each node $v$  in the graph, we let  $t_a(v)$ be the time when it is  infected, and $t_*(v)$ the time when it is last allowed to be infected; 
\end{definition}Moreover, when building our algorithm, each vertex will be allocated  a truth value of whether it needs to be further considered.

\begin{definition}
A node $v$ is said to be {\em isolated} with regards to $A_0$ if there is no vertex $w\in V$  such that $v$ becomes infected when considering $F(t)$-bootstrap percolation with initial set $A_0 \cup \{w\}$. 
\end{definition}

From the above definition, a node is isolated with regards to a set if it is impossible to infect it by adding one of any other node to that set that is not itself.
Building towards the percolating algorithm, we shall consider a few lemmas first.

\begin{remark}
If a node cannot be infected by including a neighbor in the initial set, it is isolated.
\label{L1} \end{remark}

From Remark \ref{L1}, by filling the neighbor in the initial set, we either increased the number of neighbors infected to a sufficient amount, or we expanded the time allowed to percolate with fewer neighbors so that percolation is possible. We explore these more precisely in the next lemma, which gives a quick test to see whether a vertex is isolated.

\begin{lemma} \label{L3}
Let $v$ be an uninfected node such that not all of its $n$ neighbors are in set $A_0$. Define function \begin{eqnarray}
N:\{0,1,...,n\} \rightarrow \mathbb{Z}\label{NN}\end{eqnarray} where $N(i)$ is the smallest time when $i$ of the neighbors of node $v$ is infected, and set $N(0)=0$. Then, a vertex $v$ is isolated iff there exists no $i$ such that \[F(t) \leq i+1~ {\rm for~ some~} t \in (N(i),t_*].\]
\end{lemma}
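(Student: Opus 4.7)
The plan is to prove both directions of the biconditional by analyzing how the percolation behaves when a single neighbor of $v$ is added to $A_0$, exploiting heavily the tree structure of $T$. Throughout, root $T$ at $v$ so that its neighbors $u_1, \ldots, u_n$ are the roots of disjoint subtrees $T_1, \ldots, T_n$. Denote by $A_s$ the original percolated set and by $\tilde{A}_s$ the set of infected vertices at time $s$ under the modified percolation with initial set $A_0 \cup \{w\}$, where the neighbor $w$ will be specified.

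For the direction from the condition to ``$v$ is not isolated'': given $i$ and $t \in (N(i), t_*]$ with $F(t) \leq i+1$, I would first argue that one may restrict to $i \leq n-1$, since $i=n$ either forces $v$ to be infected already in the original percolation (contradicting the hypothesis that $v$ is uninfected) or forces $F(t) = n+1$, a value no single-neighbor addition can attain. This yields a neighbor $u_j$ not yet infected in the original percolation at time $t$; set $w := u_j$. The key tree-structure claim is that the infection times within each $T_k$ for $k \neq j$ agree in the original and modified percolations as long as $v$ itself is uninfected, since removing $v$ disconnects the subtrees from one another. Applying this, $v$ has at least $i+1 \geq F(t)$ infected neighbors in $\tilde{A}_t$, so $v$ is infected at the next step and hence is not isolated.

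For the reverse direction, Remark \ref{L1} produces a neighbor $w = u_j$ whose addition to $A_0$ infects $v$ at some time $t_0 \leq t_*$. Letting $t$ be the time step at which the infection threshold for $v$ is first met in the modified percolation, one has $|N(v) \cap \tilde{A}_t| \geq F(t)$; by the tree-structure claim this count equals $1+i$, where $i$ denotes the number of neighbors of $v$ other than $u_j$ that are infected in the original percolation at time $t$. Thus $F(t) \leq i+1$, and $t \geq N(i)$ since the original percolation has at least $i$ infected neighbors of $v$ at time $t$. Choosing $i$ maximal subject to $t \geq N(i)$ (equivalently $t < N(i+1)$) delivers the strict inequality $t > N(i)$ required by the statement.

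The main obstacle is the rigorous verification of the tree-structure independence claim, namely that the infection times within each subtree $T_k$ with $k \neq j$ are unchanged by adding $w \in T_j$ to $A_0$ as long as $v$ remains uninfected. I expect this to follow by induction on the time step $s$, exploiting the fact that the only edge joining $T_k$ to the remainder of $T$ passes through $v$, so the percolation condition at any vertex in $T_k$ depends only on infections within $T_k$ whenever $v \notin A_s$. Careful reconciliation of the update rule (which infects $v$ at time $s+1$ based on the configuration at time $s$) with the semi-open interval $(N(i), t_*]$ appearing in the statement will also require some attention.
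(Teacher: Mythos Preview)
Your approach coincides with the paper's: both directions are argued by contrapositive, adding a single neighbor of $v$ to $A_0$ and using that the number of infected neighbors of $v$ then shifts by exactly one at each relevant time. The paper's proof is in fact much terser than yours---it simply asserts the shift-by-one claim without invoking the tree structure or being careful about which neighbor $s$ to add---so your explicit subtree-independence argument, your reduction to $i\le n-1$, and your choice of a neighbor $u_j$ still uninfected at time $t$ are supplying details the paper leaves implicit rather than departing from its strategy.
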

\begin{proof}
Suppose $s\in N(v)\cap A_0$. Then, if there exists $i$ such that $F(t) \leq i+1$ for some $t \in (N(i),t_*]$, using $A_0 \cup \{s\}$ as the initially infected set allows percolation to happen at time $t$ since there would be $i+1$ neighbors infected at each time $N(i)$. Thus with contrapositive, the forward direction is proven.

Let $v$ be not isolated, and $v \in P(A_0 \cup \{s\})$ for some neighbor $s$ of $v$. Then there would be $i+1$ neighbors infected at each time $N(i)$. Moreover, for $v$ being to be infected, the $i+1$ neighbors must be able to fill $v$ in the allowed time, $(N(i),t_*]$. Thus there exists $N(i)$ such that $F(t) \leq i+1$ for some $t \in (N(i),t_*]$. With contrapositive, we proved the backwards direction.
\end{proof} 

Note that if a vertex $v$ is uninfected and $N(v)\subset A_0$, then the vertex must be isolated.
In what follows we shall study the effect of having different initially infected sets when studying $F(t)$-bootstrap percolation.
\begin{lemma}\label{L2} Let $Q$ be an initial set for which a fixed vertex $v$ with $n$ neighbours  is isolated. 
Denoting the neighbors of $v$ be $s_1, s_2,...,s_n$, we let the   times at which they are infected  be $t_1^Q, t_2^Q,\ldots,t_n^Q$. Here,  if for some $1\leq i \leq n$, the vertex $s_i$ is not infected, then set $t_i^Q$ to be some arbitrarily large number.
 Moreover, consider  another initial set $P$  such that the times at which $s_1, s_2,..., s_n$ are infected are $t_1^P, t_2^P,\ldots,t_n^P$  satisfying  
 \begin{eqnarray}t_i^Q=&t_i^P&~{\rm for }~ i\neq j;\nonumber\\
 t_j^Q \leq& t_j^P&~{\rm for }~ i= j,\nonumber
 \end{eqnarray}
 for some $1 \leq j \leq n$. If $v \notin P$, then the vertex $v$ must be isolated with regards to $P$ as well.
\end{lemma}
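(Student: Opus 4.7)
The plan is to reduce the statement to the local characterization of isolated vertices furnished by Lemma \ref{L3}, and then compare the neighbor infection profiles of $v$ under $Q$ and $P$ via a short monotonicity-of-order-statistics argument. Denote by $N^{Q}$ and $N^{P}$ the functions of the form \eqref{NN} associated with $v$ when the $F(t)$-percolation is started from $Q$ and from $P$, respectively, extending the ``arbitrarily large time'' convention of the hypothesis to any neighbor never infected. Concretely, for $i\ge 1$ the value $N^{Q}(i)$ is the $i$-th smallest entry of the tuple $(t_{1}^{Q},\ldots,t_{n}^{Q})$, with $N^{Q}(0)=0$, and analogously for $N^{P}$.

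The first step is the entrywise comparison $N^{P}(i)\ge N^{Q}(i)$ for every $0\le i\le n$. Since $t_{k}^{P}\ge t_{k}^{Q}$ for all $k$ (with equality for $k\ne j$), we can use the min--max representation of the $i$-th order statistic,
\[
N(i)\;=\;\min_{\substack{S\subseteq\{1,\ldots,n\}\\|S|=i}}\;\max_{k\in S}\,t_{k},
\]
to conclude that each inner maximum grows weakly with the entries, and therefore so does the outer minimum.

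The second step is an interval-inclusion argument. Assume for contradiction that $v$ is not isolated with respect to $P$. Then Lemma \ref{L3} applied to $P$ produces an index $i$ and a time $t\in (N^{P}(i),t_{*}(v)]$ with $F(t)\le i+1$. But $(N^{P}(i),t_{*}(v)]\subseteq (N^{Q}(i),t_{*}(v)]$ by the previous step, so the very same pair $(i,t)$ would witness, again via Lemma \ref{L3}, that $v$ is not isolated with respect to $Q$ --- contradicting the hypothesis. Hence $v$ is isolated with respect to $P$.

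The main subtlety, and where I would spend the most care, is the applicability of Lemma \ref{L3}, which is stated under the hypothesis that not all $n$ neighbors of $v$ lie in the initial set. If every neighbor of $v$ belongs to $P$, then because $t_{i}^{P}=t_{i}^{Q}$ for $i\ne j$ and $0\le t_{j}^{Q}\le t_{j}^{P}=0$, every neighbor of $v$ also belongs to $Q$; in this degenerate regime, adding any extra vertex $w$ cannot alter the neighbor-infection profile, so isolation reduces to the purely function-theoretic condition that $F(t)>n$ for all $t\in (0,t_{*}(v)]$, which transfers verbatim from $Q$ to $P$ and completes the argument.
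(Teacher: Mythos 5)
Your argument is correct and follows essentially the same route as the paper's proof: both establish the entrywise comparison $N_Q(k)\le N_P(k)$ from the hypothesis on the neighbors' infection times and then transfer the isolation criterion of Lemma \ref{L3} through the interval inclusion $(N_P(m),t_*]\subseteq (N_Q(m),t_*]$. Your explicit order-statistics justification of the monotonicity and your treatment of the degenerate case in which all neighbors of $v$ lie in the initial set are minor refinements of the paper's argument, which asserts the comparison without elaboration and does not single out that edge case.
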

\begin{proof}
Consider $N_Q(i)$  as defined in \eqref{NN} for the set $Q$,  and $N_P(i)$  the corresponding function for the set $P$. Then it must be true that for all $k \in \{0,1,...,n\}$, one has that $N_Q(k) \leq N_P(k)$. Indeed,  this is because with set $P$, each neighbor of $v$ is infected at or after they are with set $Q$. Then, from  Lemma \ref{L3}, $v$ is isolated with regards to $Q$ so there is no $m$ such that 
\[F(t) \leq m+1{~\rm~ for~ some~ }~t \in (N_Q(m),t_*].\] However, since \[N_Q(k) \leq N_P(k){~\rm~ for~ all~ }~k \in \{0,1,...,n\},\] we can say that there is no $m$ such that \[F(t) \leq m+1{~\rm~ for~ some~ }~t \in (N_P(m),t_*]\] as $(N_P(m),t_*] \subseteq (N_Q(m),t_*].$ Thus we know that $v$ must also be isolated with regards to $P$.
\end{proof}

\begin{definition} \label{D2}
Given a vertex $v$ which is not isolated, we define   $t_p(v)\in (0,t_*]$ to be be the largest integer  such that there exists $N(i)$ where $F(t_p) \leq i+1$.
\end{definition}

Note that in order to fill an isolated node $v$, one can fill it by filling one of its neighbors by time $t_p(v)$, or just add the vertex it to the initial set. Hence, one needs to   fill a node $v_n$ which is either the parent ${\rm par}(v_n)$, a child ${\rm chi}(v_n)$, or itself.

\begin{lemma}
Let $v\notin A_0$ be an isolated node $v$. 
To achieve percolation, it is always better (faster) to include $v$ in $A_0$ than attempting to make $v$ unisolated.
\end{lemma}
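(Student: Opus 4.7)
The plan is to compare two ways of addressing the isolated node $v$ when constructing a percolating initial set: (i) include $v$ directly, giving initial set $A_0 \cup \{v\}$; or (ii) add some set $S$ of other vertices with $v \notin S$, designed so that $v$ becomes eventually infected when starting from $A_0 \cup S$. I would show that strategy (i) is strictly cheaper in set size and causes $v$ to be infected strictly earlier, and that the earlier infection of $v$ can only accelerate downstream propagation.

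First, I would extract a size lower bound on $S$ from the isolation hypothesis. By the definition of isolated, for every single vertex $w$, the initial set $A_0 \cup \{w\}$ fails to eventually infect $v$; hence any unisolating $S$ satisfies $|S|\geq 2$, which gives
\begin{equation*}
|A_0 \cup S| \;\geq\; |A_0|+2 \;>\; |A_0|+1 \;=\; |A_0 \cup \{v\}|,
\end{equation*}
so the direct-inclusion strategy uses strictly fewer vertices than any unisolating alternative. This is the main quantitative advantage in the context of smallest minimal percolating sets.

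Second, I would address timing. Under strategy (i), $v$ is infected at time $t_a(v)=0$. Under strategy (ii), since $v\notin A_0\cup S$, the node $v$ requires at least one bootstrap step, so $t_a^{S}(v)\geq 1$. Hence $v$ is infected strictly sooner in strategy (i). I would then use the standard monotonicity of bootstrap percolation to conclude that making $v$ available as an infected neighbor from time $0$ onward (rather than from some later time $t_a^{S}(v)\geq 1$ onward) can only lower the infection time of each neighbor of $v$, and by induction through the rest of the tree this propagates to every vertex whose infection time depends on $v$.

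The main obstacle is that the two strategies use different initial sets, so monotonicity cannot be invoked directly between the two runs. I would resolve this by a coupling-style argument: compare both runs against the common enlarged initial set $A_0 \cup S \cup \{v\}$, use monotonicity to show that the run from $A_0\cup S$ is dominated by the enlarged run, and then observe that once $v\in A_0$ the extra vertices of $S$ play no role in reaching $v$ itself, so their absence in strategy (i) cannot delay any infection event that strategy (ii) achieves via $v$. Combining the strict size inequality with the strictly earlier infection time of $v$ yields the claim that direct inclusion is always at least as good and strictly better in both the number of initially infected vertices and in the infection time of $v$.
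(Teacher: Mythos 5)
Your counting step and your observation that $v$ gets infected at time $0$ versus time $\geq 1$ are fine, but the argument breaks at exactly the point you flag as the ``main obstacle,'' and the resolution you sketch does not close it. Comparing both runs to the enlarged run from $A_0\cup S\cup\{v\}$ only gives, by monotonicity, that each of the two runs is dominated by the enlarged one; it gives no relation between the run from $A_0\cup\{v\}$ and the run from $A_0\cup S$. Your final claim covers only infection events that strategy (ii) ``achieves via $v$,'' but the vertices of $S$ can infect parts of the graph directly, not through $v$ at all (siblings of $v$, other children of $v$'s parent, whole regions of the subtree below $v$), and those events may simply be lost when $S$ is replaced by $\{v\}$. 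For the same reason the size inequality $|A_0\cup S|>|A_0\cup\{v\}|$ by itself does not establish that including $v$ is ``better'' for building a smallest percolating set: the two extra vertices in $S$ may be doing double duty elsewhere, so one cannot conclude that swapping $S$ for $\{v\}$ preserves percolation of the whole tree at smaller cost.

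What is missing is precisely the structural input the paper uses: on a tree, and in the bottom-up context of the algorithm, the vertices one would add to get $v$ infected indirectly are descendants of $v$, and every path from $v$'s subtree to the rest of the tree passes through $v$ (hence through ${\rm par}(v)$). Therefore the entire influence of such an $S$ on the rest of the tree is summarized by the time at which $v$ becomes infected, and having $v$ infected at time $0$ dominates having it infected at any positive time -- the ``upwards percolation'' from $v$ infected later is a subset of that from $v$ infected at time $0$. Your proposal never invokes the tree structure or the parent/descendant decomposition, and without it the key comparison between the two initial sets is not just unproven but false in general (for arbitrary graphs, or for arbitrary unisolating sets $S$, the final infected set from $A_0\cup\{v\}$ need not contain the one from $A_0\cup S$). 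To repair the proof you would need to argue that $S$ consists of descendants of $v$, that their effect outside $v$'s subtree factors through $v$'s infection time, and that within the subtree the algorithm's earlier steps make $v$'s early infection at least as good as any indirect scheme.
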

\begin{proof}
It is possible to make $v$ isolated by including only descendants of $v$ in $A_0$ since we must include less than $deg(v)$ neighbors. But we know that if given the choice to include a descendant or a $v$ to the initial set, choosing $v$ is absolutely advantageous because the upwards percolation achieved by $v$ infected at some positive time   is a subset of upwards percolation achieved by filling it at time $0$. Thus including $v$ to the initial set is superior.
\end{proof}
The above set up can be understood further to find which vertex needs to be chosen to be $v_n$.
\begin{lemma} Consider a vertex  $v\notin A_0$. Then, in finding a node $u$ to add to $A_0$ so that $v \in A_\infty$ for the initial set $A_0 \cup \{u\}$ and  such $A_\infty$ is maximized, the vertex $v_n$ must be the parent ${\rm par}(v)$ of $v$.
\end{lemma}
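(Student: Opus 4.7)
The plan is to compare the three candidate nodes $u \in \{v,\,\mathrm{par}(v),\,\mathrm{chi}(v)\}$ identified just before the lemma, and show that $u = \mathrm{par}(v)$ yields the largest $A_\infty$ subject to the constraint $v \in A_\infty(A_0 \cup \{u\})$.

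First I would eliminate the case $u = \mathrm{chi}(v)$ by invoking the previous lemma, which shows that any attempt to unisolate $v$ by adding a descendant is dominated by filling $v$ itself: the upward percolation achievable by first filling a child and waiting for it to percolate to $v$ is strictly contained in the upward percolation achievable when $v$ is in $A_0$ from time $0$. This reduces the comparison to $u = v$ versus $u = \mathrm{par}(v)$.

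Next, writing $P = \mathrm{par}(v)$ and setting $A^v := A_\infty(A_0 \cup \{v\})$ and $A^P := A_\infty(A_0 \cup \{P\})$, I would argue $A^v \subseteq A^P$ as follows. The hypothesis $v \in A^P$ supplies some earliest time $t_v$ at which $v$ becomes infected in the $P$-case. From that moment on, the percolation inside the subtree rooted at $v$ proceeds from the same neighborhood structure as in the $v$-case, merely shifted forward in time. Meanwhile $P \in A^P$ already at time $0$, so the infection can additionally propagate upward through $P$ to its own parent, siblings of $v$, and further ancestors, vertices that are not in general reached in $A^v$ unless $v$ by itself happens to meet $P$'s threshold. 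Together these two observations yield $A^v \subseteq A^P$, typically with strict containment, so $P$ is the maximizing choice.

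The main obstacle is the temporal delay in the parent-case: since $v$ is infected at a positive time $t_v$ rather than at time $0$, downward percolation from $v$ is shifted and, because $F(t)$ is time-dependent, could in principle fail to reach some descendants that were reachable in the $v$-case. I would close this gap by invoking Lemma \ref{L2} -- which captures precisely the monotonicity in time of isolation -- together with the hypothesis $v \in A^P$, in order to conclude that every descendant of $v$ that becomes infected under $A_0 \cup \{v\}$ is also infected under $A_0 \cup \{P\}$, completing the proof.
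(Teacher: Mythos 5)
There is a genuine gap in the step that compares $u=v$ with $u=\mathrm{par}(v)$. The containment $A_\infty(A_0\cup\{v\})\subseteq A_\infty(A_0\cup\{\mathrm{par}(v)\})$ does \emph{not} follow from the hypothesis $v\in A_\infty(A_0\cup\{\mathrm{par}(v)\})$ alone, precisely because of the time-dependence you flag: if, say, $F(t)=1$ only for small $t$ and $F(t)\geq 2$ afterwards, and $v$ has a leaf child $c$, then with $v\in A_0$ the child $c$ is infected during the early window, whereas with $\mathrm{par}(v)\in A_0$ the vertex $v$ may only become infected after that window has closed, so $c$ is never infected even though $v$ is. Thus under the bare constraint ``$v\in A_\infty$'' neither choice dominates the other, and your claimed inclusion is false as stated. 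Moreover, the tool you propose to close this gap, Lemma \ref{L2}, points in the wrong direction: it says that \emph{delaying} the infection times of the neighbours of a vertex preserves \emph{isolation} (non-infectability); it cannot be used to conclude that a delayed infection of $v$ still reaches every descendant that the time-$0$ infection of $v$ would reach, which is exactly the statement you need and which is false in the example above.

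The paper closes this gap differently, by using the deadline $t_*(v)$ maintained by the algorithm rather than the weaker condition $v\in A_\infty$. By the time $v$ is considered, $t_*(v)$ has been propagated up from the $t_p$-values of the descendants (Definition \ref{D2}), so \emph{any} way of getting $v$ infected by time $t_*(v)$ already guarantees that the whole subtree below $v$ becomes infected; the downward percolation is therefore identical across the candidate choices of $u$, and the comparison reduces entirely to upward percolation, which in a tree must pass through $\mathrm{par}(v)$. Having $\mathrm{par}(v)$ infected at time $0$ then dominates any scenario in which it is infected later (as happens when $u$ is $v$ or a descendant of $v$), by the standard earlier-is-better monotonicity. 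Your elimination of the child case via the preceding lemma is fine in spirit, but the decisive $v$-versus-parent comparison needs the deadline argument, not Lemma \ref{L2}.
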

\begin{proof}
Filling $v$ by time $t_*(v)$ already ensures that all descendants of $v$ will be infected, and that all percolation upwards must go through the parent ${\rm par}(v)$ of $v$. This means that filling any child of $v$ in order to fill $v$ (by including some descendant of $v$ in $A_0$) we obtain a subset of percolation if we include the parent ${\rm par}(v)$ of $v$ in $A_0$. Therefore,   the parent ${\rm par}(v)$ of $v$ or a further ancestor needs to be included in $A_0$, which means $v_n$ needs to be the parent ${\rm par}(v)$ of $v$.
\end{proof}

Note that given a node $v\notin A_0$,  if we fill its parent ${\rm par}(v)$ before $t_p(v)$, then the vertex will be infected. We are now ready for  our main result, which improves   the naive $O(t2^n)$ bound for finding minimal percolating sets to $O(tn)$, as discussed further in the last section.

\begin{theorem}\label{teorema}\label{teo1}
To obtain one smallest minimal percolating set of a tree $T(V,E)$ with percolation function $F(t)$, proceed as follows:
\begin{itemize}
\item Step 1. initialize tree: for each node $v$, set $t_*(v)$ to be some arbitrarily large number, and set it to true for needing to be considered. 

\item Step 2. percolate using current $A_0$. Save the time $t_a$'s at which the nodes were infected. Stop the algorithm if the set of nodes that are infected equals the set $V$.

\item Step 3. consider a node $v$ that is furthest away from the root, and if there are multiple such nodes, choose the one that is isolated, if it exists. 

\begin{itemize}
\item if $v$ is isolated or is the root, add $v$ to $A_0$.
\item otherwise, set $t_*({\rm par}(v))=t_p(v)-1$  (as Definition \ref{D2})  if it is smaller than the current $t_*({\rm par}(v))$ of the parent.
\end{itemize}
Set $v$ as considered.

\item Step 4. go to step 2.
\end{itemize}
After the process has finished, the resulting set $A_0$ is one of the smallest minimal percolating set.
\end{theorem}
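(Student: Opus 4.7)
The plan is to prove the theorem in two stages: first that the output $A_0$ is indeed a percolating set, and second that no percolating set can be smaller. The whole argument rides on the fact that the algorithm processes nodes from the leaves towards the root, so at the moment a node $v$ is considered, every descendant has already been handled and the information $t_*(v)$ reflects every constraint coming from below. The central idea is that, in a tree, once we decide to infect a node $v$, the only remaining interaction of $v$ with the rest of the graph is through its parent $\mathrm{par}(v)$, so the problem decomposes cleanly along the parent-child structure.

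To prove \emph{percolation}, I would argue by reverse induction on the depth of the node. If $v$ is a leaf, Step~3 either places $v$ in $A_0$ directly (when $v$ is isolated or is the root) or demands that $\mathrm{par}(v)$ be infected by time $t_p(v)-1$. If $v$ is internal, then by the induction hypothesis all descendants of $v$ will be infected in the final process, so the function $N(i)$ from Lemma~\ref{L3} used to test whether $v$ is isolated is computed correctly relative to the \emph{eventual} infection times. Lemma~\ref{L2} is the key technical input here: tightening $t_*({\mathrm{par}(v)})$ never makes any previously un-isolated node become isolated without recording that fact, because the monotonicity of infection times with respect to $A_0$ preserves the isolation condition. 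Hence, when the loop terminates, for every $v\notin A_0$ there is a witness neighbor (the parent, by the last lemma before the theorem) that infects $v$ in time, and so $A_\infty=V$.

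For \emph{minimality}, I would show by induction on the depth that the portion of $A_0$ produced by the algorithm in any subtree rooted at $v$ is no larger than the portion any other percolating set places in that subtree, subject to the same deadline $t_*(v)$. The base case is a leaf, where one is either forced to include the leaf or one is not. For the inductive step, suppose the algorithm adds $v$ to $A_0$; then Step~3 did so because $v$ was isolated (or was the root). Isolation, by Lemma~\ref{L3} and the preceding discussion, means that no assignment of infection times to the neighbors of $v$ consistent with the deadline $t_*(v)$ can infect $v$ from outside, so any percolating set must also include $v$ or some strict ancestor whose inclusion we can charge against in an exchange argument. If instead the algorithm does not add $v$, it only tightens the parent's deadline, and the inductive hypothesis then applies to the parent's subproblem with the updated $t_*$. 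Summing the per-subtree inequalities over the children of the root yields $|A_0^{\text{alg}}|\le |A_0^{\text{opt}}|$.

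The main obstacle I anticipate is the bookkeeping in the minimality step, specifically showing that the deadlines $t_*(\mathrm{par}(v))$ propagated upward by the algorithm are exactly the tightest deadlines any optimal solution must also respect. This requires ruling out the possibility that an optimal solution could ``pay less'' in the subtree of $v$ by allowing $v$ to be filled later, using extra infected vertices elsewhere to compensate; the tree structure forbids such compensation because the only path between the $v$-subtree and the rest of the graph runs through $\mathrm{par}(v)$, so there is no way to route infection around $\mathrm{par}(v)$. Making this exchange argument rigorous, and verifying that the choice rule in Step~3 (prefer the isolated node among the deepest candidates) breaks ties correctly so as not to force redundant additions, is where the bulk of the work lies.
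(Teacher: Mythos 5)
Your first half (the output set percolates) is fine and is essentially the paper's argument: isolated vertices and, if needed, the root are put into $A_0$, and every other vertex is guaranteed a parent infected by time $t_p(v)$. The genuine gap is in the ``smallest'' half. The inductive invariant you propose --- that in every subtree rooted at $v$ the algorithm places no more vertices than ``any other percolating set \ldots subject to the same deadline $t_*(v)$'' --- is not a statement about arbitrary competing sets: the deadlines $t_*$ are bookkeeping produced by this particular run of the algorithm, and an optimal set $B$ is under no obligation to respect them. It may place extra vertices inside the subtree of $v$, or include $v$ itself, so that ${\rm par}(v)$ is never needed by time $t_p(v)$; and since $F(t)=1$ is allowed, it may even infect the whole subtree from above. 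You explicitly defer exactly this point (``where the bulk of the work lies''), but that deferred point \emph{is} the theorem: without it there is no proof that the output has minimum size. Moreover, your fallback that ``any percolating set must also include $v$ or some strict ancestor whose inclusion we can charge against'' charges upward, and upward charging cannot yield an injective accounting --- many deep vertices added by the algorithm could all point to the same ancestor, so the count does not close.

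The paper closes this gap with a global injection rather than a per-subtree count, and this is the idea missing from your sketch. One processes the vertices $a$ of the algorithm's output from the bottom up and maps each to an element of the competing set $B$ that is either $a$ itself or a descendant of $a$, each element of $B$ being used at most once. When no unused such element of $B$ exists, the children of $a$ are infected under $B$ no earlier than under the algorithm's run, so Lemma \ref{L2} (iterated over the neighbors, which is the only place the monotonicity lemma is really needed) shows $a$ is isolated with respect to $B$; an isolated vertex outside the initial set can never become infected, so percolation of $B$ forces $a\in B$ and $a$ is mapped to itself. Injectivity then gives $|A_0|\le |B|$ for every percolating set $B$, which is the desired optimality. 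If you insist on your subtree formulation you would instead have to set up and prove a two-sided invariant (the minimum cost of infecting the subtree of $v$ both with and without help through ${\rm par}(v)$, together with the latest admissible infection time of $v$), which is considerably more machinery than the charging argument and is not supplied in your proposal. Finally, note that irremovability (``minimal'') does not need a separate induction: as observed in Section \ref{minimal2}, a smallest percolating set is automatically minimal, or one can argue directly, as the paper does, that every vertex the algorithm adds other than the root is isolated at the moment it is added and hence cannot be recovered after deletion.
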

\begin{proof}
The proof of the  theorem, describing the algorithm through which one can find a smallest percolating set,  shall be organized as follows: we will first show that the set $A_0$ constructed through the steps of the theorem is a minimal percolating set, and then show that it is the smallest such set.
In order to see that $A_0$ is a minimal percolating set, we first need to show that $A_0$ percolates. In step 3, we have included all isolated nodes, as well as the root if it wasn't infected already, in $A_0$ and guaranteed to fill all other nodes by guaranteeing that their parents will be infected by their time $t_p$.

Showing that $A_0$ is a minimal percolating set is equivalent to showing that if we remove any node from $A_0$, it will not percolate to the whole tree. Note that in the process, we have only included isolated nodes in $A_0$ other than the root. This means that if any node $v_0$ is removed from $A_0$, it will not percolate to $v_0$ because we only fill nodes higher than $v_0$ after considering $v_0$ and since turning a node isolated requires filling at least one node higher and one descendant of $v_0$, it cannot be infected to after removing it from $A_0$. Moreover, if the root is in $A_0$, since we considered the root last, it is implied that the rest of $A_0$ does not percolate to root. Thus, $A_0$ is a minimal percolating set.

Now we show that the set $A_0$ constructed through the algorithm is of the smallest percolating size by contradiction using Lemma \ref{L2}. For this, suppose there is some other minimal percolating set $B$ for which $|B|\leq |A|$. Then, we can build an injection $A_0$ to $B$ in the following manner: iteratively consider the node $a$ that is furthest from the root and $a \in A_0$ that hasn't been considered, and map it to a vertex $b_0$ which is  itself or one of its descendants of $b$ where $b \in B$. We know that such a $b_0$ must exist by induction.

We first consider the case where $a$ has no descendant in $A$.  Then, if  the vertex $b\in B$ and $b$ is a descendant of $a$, we map $a$ to $b$. Now suppose there is no node $b$ that is a descendant of $a$ where $b \in B$. Then, $a \in B$ because otherwise $a$ would be isolated with regards to $B$ as well, by Lemma \ref{L2}. This means that we can map $a$ to $a$ in this case.

Now we can consider the case where all the descendants $d$ of $a$ such that $d \in A:=A_0$ has been mapped to a node $b_d\in B$ where $b_d$ is $d$ or a descendant of $d$. If there is such a $b\in B$, then $b$ is a descendant of $a$, and thus no nodes in $A$ have  been matched to $b$ yet,  allowing us to map $a$ to $b$. Now suppose there is no such $b\in B$. This means that there is no $b\in B$ such that all of the descendants of $a$ are descendants of $b$. Then, all nodes in $B$ that are descendants of $a$ is either some descendant of $a\in A$ or some descendant of a descendant of $a$ in $A$. This means that percolating $B$, the children of $a$ will all be infected at later times than when percolating $A$, and by Lemma \ref{L2}, one has that $a \in B$ because $a$ would be isolated with regards to $B$. So in this case, we can map $a$ to $a$.

The map constructed above is injective because each element of $B$ has been mapped to not more than once.
Since we constructed an injective function from the set generated by the algorithm $A_0$ to a smaller minimal percolating set $B_0$, we have a contradiction because $A_0$ then must be the same size or larger than $B_0$. Thus, the set generated from the algorithm must be a smallest minimal percolating set.
\end{proof}

From Theorem \ref{teo1}   one can  find the smallest minimal percolating set on any finite tree. Moreover, it gives an intuition for how to think of the vertices of the graph: in particular, the property of ``isolated'' is not an absolute property, but a property relative to the set of nodes that has been infected before it. This isolatedness is easy to define and work with in trees since each node has at most one parent. Moreover, a similar property may be considered in more general graphs and we hope to explore this in future work. Below we shall demonstrate the algorithm of Theorem \ref{teo1} with an example.

\begin{example}
We will preform the algorithm on the tree in Example \ref{ex2}, with percolating function $F(t)=t$. We first initialize all the nodes, setting their time $t_*$ to some arbitrarily large number, represented as $\infty$ in Figure \ref{inf1} below. 
\begin{figure}[h!]
\centering
 \centering
\includegraphics[scale=.25]{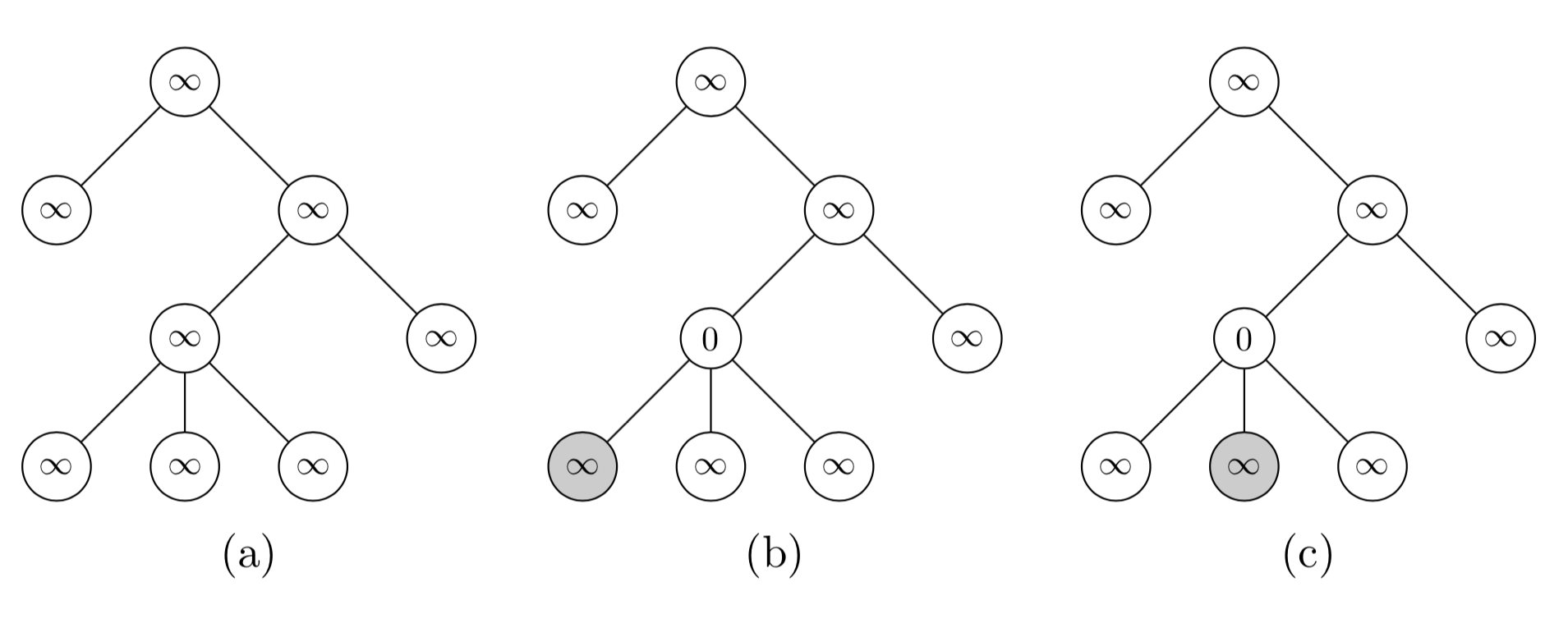}
\caption{(a)-(c) show the first three updates through the algorithm in Theorem \ref{teo1}, where the vertices considered at each time are shaded and each vertex is assigned the value of $t_*$. }\label{inf1}
\end{figure}

Percolating the empty set $A_0$,    the resulting infected set is empty, as shown in Figure \ref{inf1} (a). We then consider the furthest node from root. None of them are isolated, so we can consider any; we begin by considering node $6$ in the labelling of Figure \ref{ex1} of Example \ref{ex2}. It is not isolated, so we set the $t_*$ of the parent to $t_p-1=0$, as can be seen in Figure \ref{inf1}  (b). Then we consider another node furthest from the root, and through the algorithm set the $t_*$ of the parent to $t_p-1=0$, as can be seen in Figure \ref{inf1}  (c). The following steps of the algorithm are depicted in Figure \ref{inf2} below.

\begin{figure}[h!]
\centering
 \centering
\includegraphics[scale=.25]{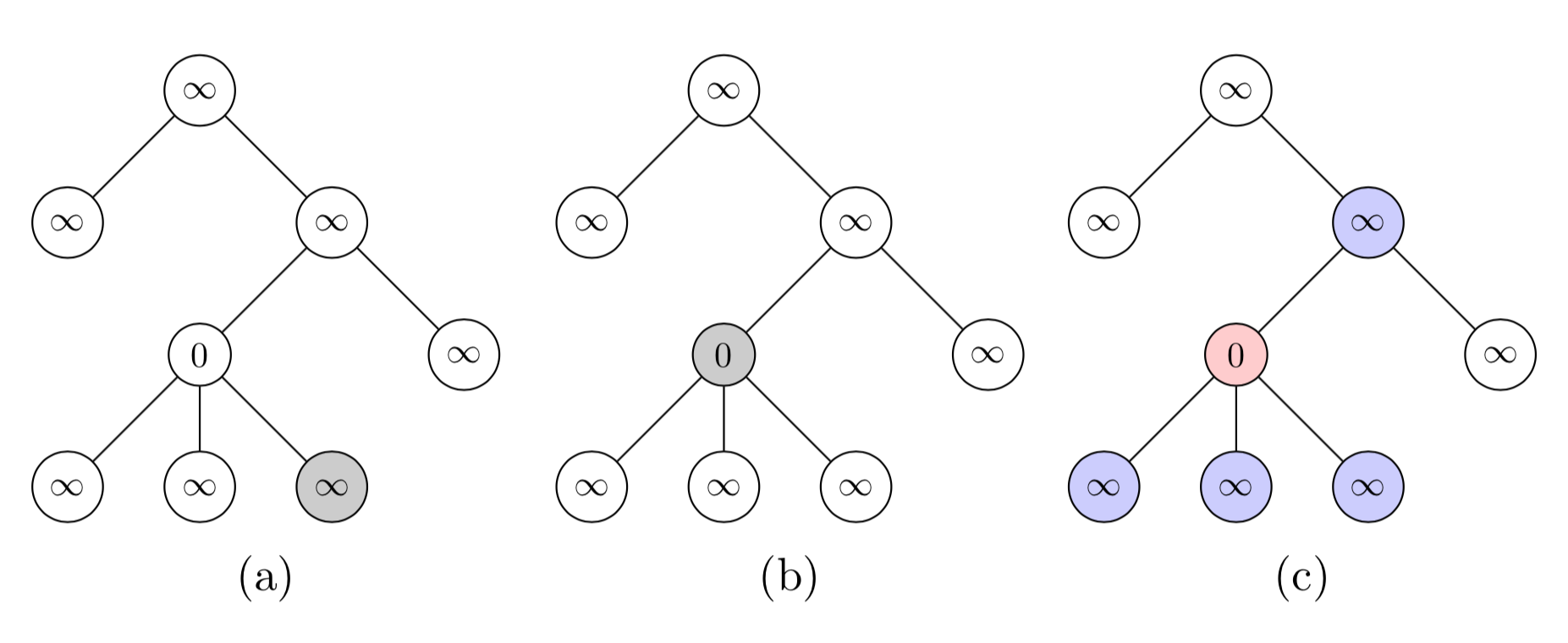}
\caption{ (a)-(b) show the updates 4-5 through the algorithm.   (c) shows the set $A_0$ in   red, and the infected vertices in blue.
 }\label{inf2}
\end{figure}

As done in the first three steps of Figure \ref{inf1}, we consider the next furthest node $v$ from the root, and by the same reasoning as node $6$, set the $t_*{\rm par}(v)$ of the parent to $t_*{\rm par}(v)=1$, as can be seen in Figure  \ref{inf2} (a). Now we consider node $4$: since it is isolated, so we fill it in as in Figure \ref{inf2} (b). The set of nodes infected can be seen in Figure \ref{inf2} (c). We then consider node $5$, the furthest node from the root not considered yet. Since it is not isolated,  change the $t_*{\rm par}(v)$ of its parent to $t_p(v)-1=0$, as   in  Figure \ref{inf3} (a). 

\begin{figure}[h!]
\centering
\centering
\includegraphics[scale=.25]{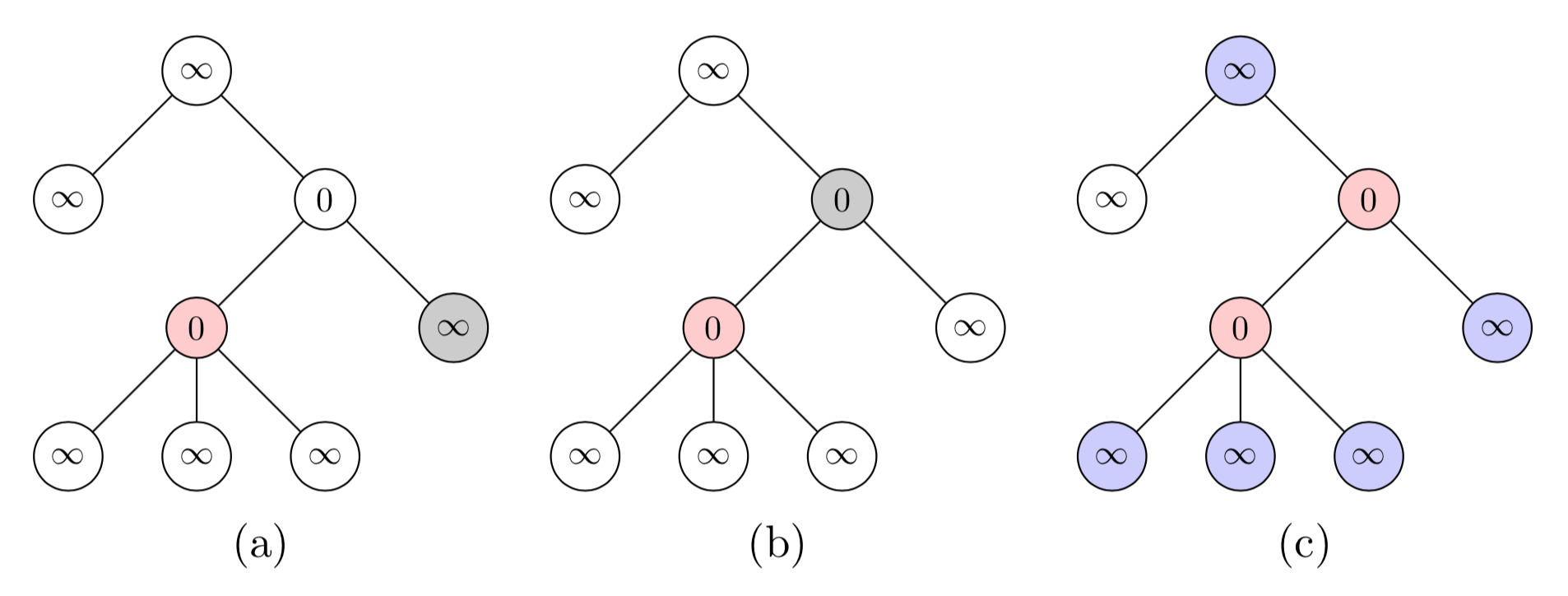}
\caption{(a)-(c) show the updates  through the algorithm in Theorem \ref{teo1} after setting $A_0$ to be as in Figure \ref{inf2}.}\label{inf3}
\end{figure}

Then we consider node $3$, which is isolated, so we include it in $A_0$. The infected nodes as a result of percolation by this $A_0$ is shown as red vertices in Figure \ref{inf3} (c).
In order to finish the process, consider the vertex $v=2$ since it is the furthest away non-considered node. It is not isolated so we change the $t_*{\rm par}(v)$ of its parent to $t_p(v)-1=0$, as shown in Figure  \ref{inf4}  (a). Finally, we consider the root: since it is isolated, we include it in our $A_0$ as seen in Figure  \ref{inf4} (b). Finally, percolating this $A_0$ results in all nodes being infected as shown in Figure  \ref{inf4}  (c), and thus we stop our algorithm.

\begin{figure}[H]
\centering
 \centering
\includegraphics[scale=.28]{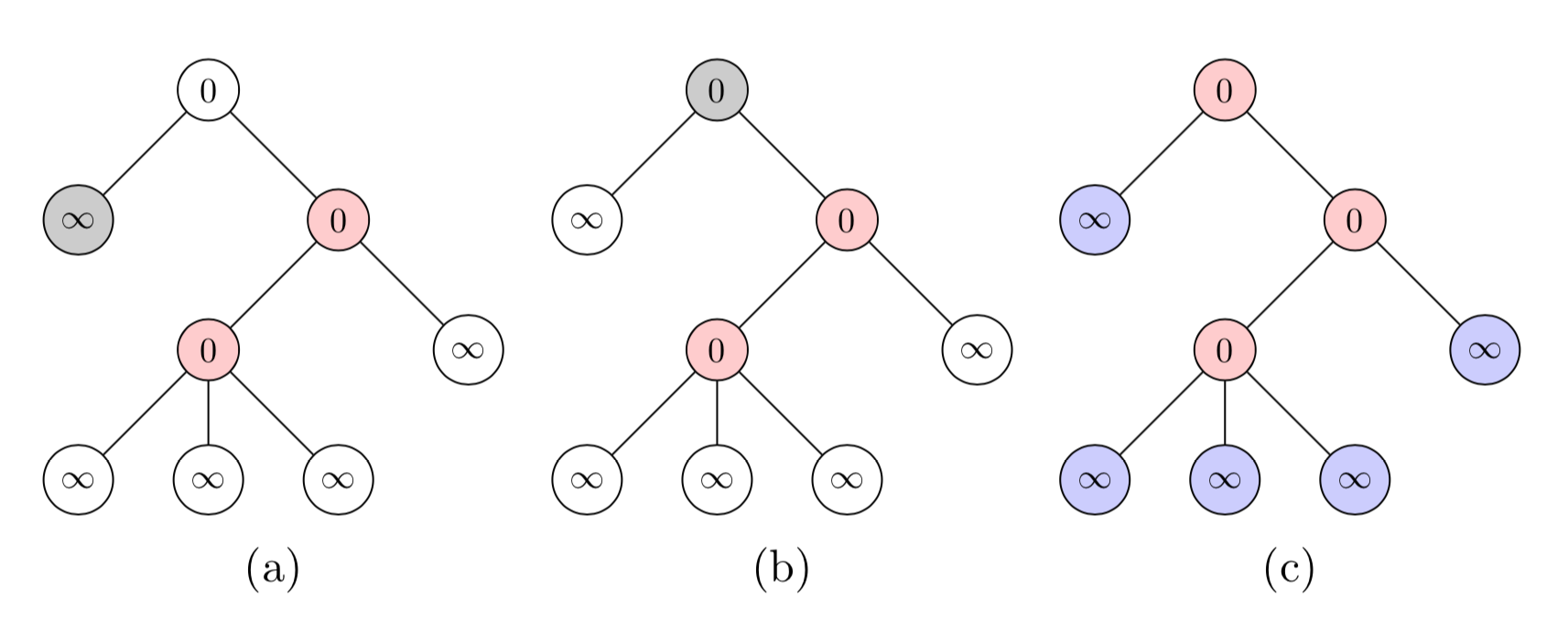}
\caption{Final steps of the algorithm.}\label{inf4}
\end{figure}

Through the above algorithm, we have constructed a smallest minimal percolating set shown as red vertices in Figure  \ref{inf4} (c), which is of size $3$. Comparing it with Example \ref{ex2}, we see that the minimal percolating set in that example is indeed the smallest, also with $3$ elements. Finally, it should be noted that  in general the times  $t_p$ for each node could be different from each other and are not the same object. 
\end{example}
From the above example, and its comparison with Example \ref{ex2}, one can see  that a graph can have multiple different smallest minimal percolating sets, and the algorithm finds just one.
In the algorithm of Theorem \ref{teo1}, one  minimizes the size of a minimal percolating set ,\ relying on the fact that as long as a node is not isolated, one can engineer its parent to become infected so as to infect the initial node. The motivation of the definition of isolated stems from trying to find a variable that describes whether a node is still possible to become infected by infecting its parent. Because the algorithm is on trees, we could define isolation to be the inability to be infected if we add only one node.

\section{Concluding remarks}\label{final}
In order to show the relevance of our work, we shall conclude this note with a short comparison of our model   with those existing in the literature. \\

\noindent{\bf Complexity.} Firstly we shall consider the  complexity of the algorithm in Theorem \ref{teo1} to find the smallest minimal percolating set on a graph with $n$ vertices. To calculate this, suppose $t$ is the upper bound on percolation time; we have presented a way to find such an upper bound in the previous sections. 
In the algorithm, we first initialize the tree, which is linear timed. Steps $2$ and $3$ are run at most $n$ times as there can only be a total of $n$ unconsidered nodes. The upper bound on time is $t$, so steps 2 will take $t$ to run. Determining whether a node is isolated is linear timed, so determining isolated-ness of all nodes on the same level is quadratic timed, and doing the specifics of step 3 is constant timed. Thus the algorithm is $O(n+n(t+n^2)) = O(tn + n^3) = O(tn)$, much better than then $O(t2^n)$ complexity of the naive algorithm.\\

\noindent{\bf Comparison on perfect trees.}  Finally,  we shall compare our algorithm with  classical $r$-bootstrap percolation. For this, in Figure \ref{comp} we show a comparison of sizes of the smallest minimal percolating sets on perfect trees of height $4$, varying the degree of the tree. Two different functions were compared: one is constant and the other is quadratic. We see that the time-dependent bootstrap percolation model can be superior in modelling diseases with time-variant speed of spread, for that if each individual has around $10$ social connections, the smallest number of individuals needed to be infected in order to percolate   the whole population has a difference of around $10^3$ between the two models.

\begin{figure}[H]
    \centering
        \resizebox{0.45\textwidth}{!}{%
    \includegraphics{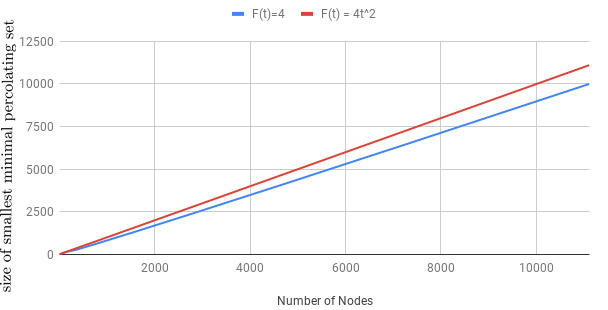}}
    \caption{The size of smallest minimal percolating sets on perfect trees with height 4, with a constant and a non-constant percolation function $F(t)$.}
    \label{comp}
\end{figure}
\smallbreak

\noindent{\bf Comparison on random trees.}  We shall conclude this work by comparing the smallest minimal percolating sets found through our algorithm and those constructed by Riedl in  \cite{percset}. In order to understand the difference of the two models, we shall first consider in Figure \ref{comp1} three percolating functions $F(t)$ on random trees of different sizes, where each random tree has been formed by beginning with one node, and then for each new node $i$ we add, use a random number from $1$ to $i-1$ to determine where to attach this node. 
 
 \begin{figure}[H]
    \centering
        \resizebox{0.4\textwidth}{!}{%
    \includegraphics{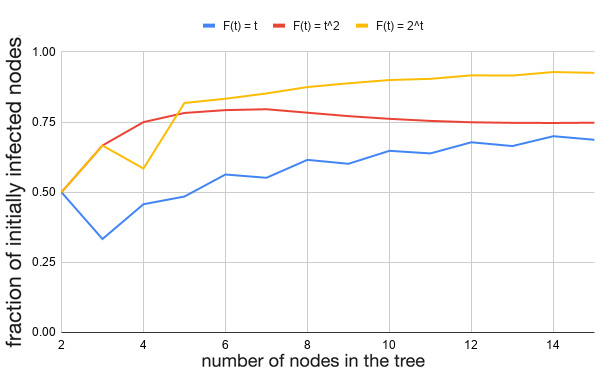}}
    \caption{Trials done on 10000 random trees of $n$ nodes, taking the average, and dividing it by $n$ for the fraction of node needed to be initially infected for the model to percolate.}
    \label{comp1}
\end{figure}

 In the above picture, the size of  the smallest minimal percolating set  can be obtained by multiplying the size of the minimal percolating set by the corresponding value of $n$. In particular, one can see how the exponential function requires an increasingly larger minimal percolating set in comparison with polynomial percolating functions. 
 \newpage

 \noindent{\bf Comparison with \cite{percset}.} 
To compare with the work of   \cite{percset}, we shall run the algorithm with $F(t)=2$ (leading to 2-bootstrap percolation as considered in  \cite{percset}) as well as linear-timed function on the following graph:

 \begin{figure}[H]
    \centering
        \resizebox{0.2\textwidth}{!}{%
    \includegraphics{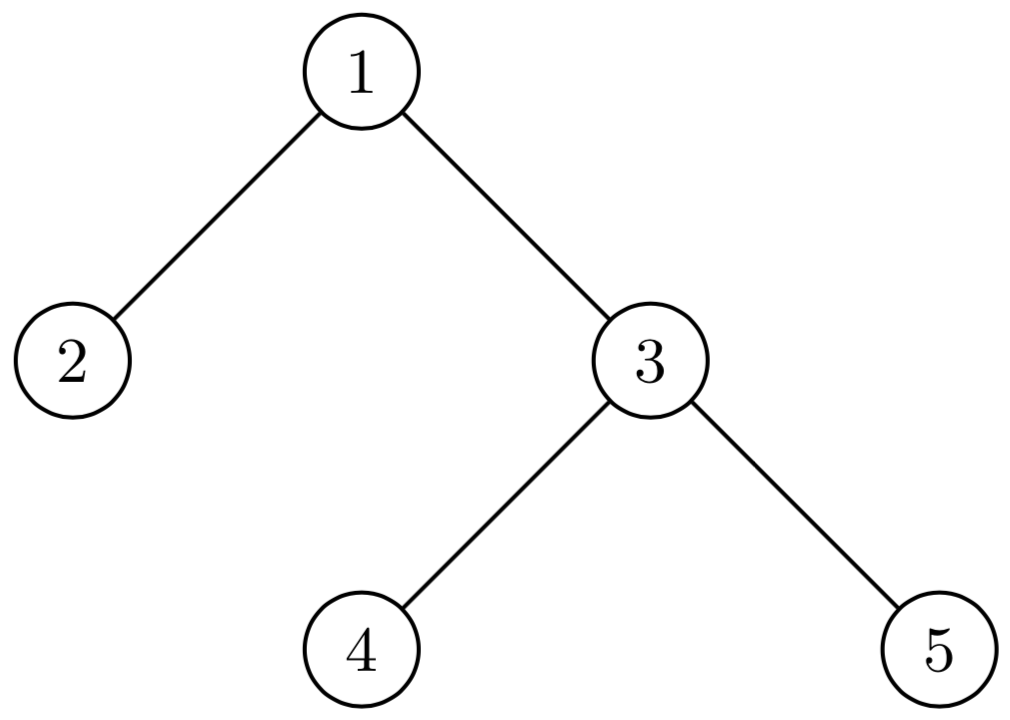}}
    \caption{Degree 2 tree with 5 nodes.}
    \label{comp1}
\end{figure}

With our algorithm, we see that nodes $2$, $3$ and $5$ are isolated respectively, and when we add them to the initial set, all nodes become infected. Thus the smallest minimal percolating set with our algorithm has size $3$.

Riedl provided an algorithm for the smallest minimal percolating sets in trees for $r$-bootstrap percolation in \cite{percset} that runs in linear time. We shall describe his algorithm generally to clarify the comparisons we will make. Riedl defined a trailing star or trailing pseudo-star as a subtree with each vertex being of distance at most $1$ or $2$ away, respectively, from a certain center vertex that is connected to the rest of the tree by only one edge. Then, the first step of Riedl's algorithm is a reduction procedure that ensures every non-leaf has degree at least $r$: intuitively, one repeatedly finds a vertex with degree less than $r$, include it to the minimal percolating set, remove it and all the edges attached to it, and for each of the connected components, add a new node with degree $1$ connected to the node that was a neighbor of the node we removed. 
Then, the algorithm identifies a trailing star or pseudo-star, whose center shall be denoted by $v$ and its set of leaves by $L$. Letting the original tree be $T$, if the number of leafs on $v$ is less than $r$, then set $T'=T \setminus (v \cup L)$; otherwise, set $T'=T\setminus L$. Recursively set $A'$ as the smallest minimal percolating set of $T'$ under $r$-bootstrap percolation. Then, the smallest minimal percolating set for $T$ is $A' \cup L$ if $|L|<r$ and $A' \cup L \setminus v$ otherwise.
Using Riedl's algorithm, we first note that there is a trailing star centered at $3$ with $2$ leaves. Removing the leaf, there is a trailing star at $1$ with $1$ leaf. Removing $1$ and $2$, we have one node left, which is in our $A'$. Adding the leaves back and removing $3$, we have an $A_0$ of $2,3$ and $5$, a smallest minimal percolating set. Thus the smallest minimal percolating set with Riedl's algorithm also has size $3$, as expected.

We shall now compare our algorithm to that of Riedl. A key step in Riedl's algorithm, which is including the leaves of stars and pseudo-stars in the final minimal percolating set, assumes that these leaves cannot be infected as it is assumed that $r > 1$. However, in our algorithm, we consider functions that may have the value of $1$ somewhere in the function, thus we cannot make that assumption. Further, in $r$-bootstrap percolation, time of infection of each vertex does not need to be taken into account when calculating the conditions for a node to be infected as that $r$ is constant, whereas in the time-dependent case, it is necessary: suppose a node has $n$ neighbors, and there is only one $t$ such that $F(t) \leq n$, so all neighbors must be infected by time $n$ in order for $n$ to become infected.\\

 \noindent{\bf Concluding remarks.}  The problem our algorithm solves is a generalization of Riedl's, for that it finds one smallest minimal percolating set for functions including constant ones. It has higher computational complexity for that it is not guaranteed for an unisolated node to be infected once one other neighbor of it is infected without accounting for time limits.
Finally, we should mention that the work presented in previous sections could be generalized in several directions and, in particular, we hope to develop a similar algorithm for largest minimal percolating set; and    study   the size of largest and smallest minimal percolating sets in  lattices.
 
  ~\\  
\noindent{\bf Acknowledgements.} The authors are thankful to MIT PRIMES-USA for the opportunity to conduct this research together, and in particular
Tanya Khovanova for her continued support, to Eric Riedl and Yongyi Chen for comments on a draft of the paper, and  to Rinni Bhansali and Fidel I. Schaposnik for useful advice regarding our code.   The work of Laura Schaposnik is partially supported through  the NSF grants
DMS-1509693 and CAREER DMS 1749013, and she is thankful to the Simons Center for Geometry and Physics for the hospitality during part of the preparation of the manuscript. This material is also based upon work supported by the National Science 
Foundation under Grant No. DMS- 1440140 while Laura Schaposnik was in residence at the Mathematical Sciences 
Research Institute in Berkeley, California, during the Fall 2019 semester.

\bibliography{Schaposnik_Percolation}{}

\newcommand{\etalchar}[1]{$^{#1}$}
\begin{thebibliography}{AMSP{\etalchar{+}}18}

\bibitem[AL03]{applications}
J.~Adler and U.~Lev, ``Bootstrap percolation: visualizations and
  applications,'' {\em Brazilian Journal of Physics} {\bfseries 33} no.~3,
  (2003) 641--644.

\bibitem[AMSP{\etalchar{+}}18]{ahmad2018analyzing}
N.~M. Ahmad, C.~Monta{\~n}ola-Sales, C.~Prats, M.~Musa, D.~L{\'o}pez, and
  J.~Casanovas-Garcia, ``analyzing policymaking for tuberculosis control in
  nigeria,'' {\em Complexity} {\bfseries 2018} (2018) .

\bibitem[BGH{\etalchar{+}}14]{MR3164766}
B.~Bollob\'{a}s, K.~Gunderson, C.~Holmgren, S.~Janson, and M.~Przykucki,
  ``Bootstrap percolation on {G}alton-{W}atson trees,'' {\em Electron. J.
  Probab.} {\bfseries 19} (2014) no. 13, 27.

\bibitem[Bra17]{epidemiology}
F.~Brauer, ``Mathematical epidemiology: Past, present, and future,'' {\em
  Infectious Disease Modelling} {\bfseries 2} no.~2, (2017) 113--127.

\bibitem[BS19]{gossip}
R.~Bhansali and L.~P. Schaposnik, ``A trust model for spreading gossip in
  social networks,'' {\em arXiv preprint arXiv:1905.11204} (2019) .

\bibitem[CR18]{chowell2018spatial}
G.~Chowell and R.~Rothenberg, ``Spatial infectious disease epidemiology: on the
  cusp,'' 2018.

\bibitem[DAPB{\etalchar{+}}15]{challenges2}
D.~De~Angelis, A.~M. Presanis, P.~J. Birrell, G.~S. Tomba, and T.~House, ``Four
  key challenges in infectious disease modelling using data from multiple
  sources,'' {\em Epidemics} {\bfseries 10} (2015) 83--87.

\bibitem[FBB{\etalchar{+}}15]{challenges}
S.~Funk, S.~Bansal, C.~T. Bauch, K.~T. Eames, W.~J. Edmunds, A.~P. Galvani, and
  P.~Klepac, ``Nine challenges in incorporating the dynamics of behaviour in
  infectious diseases models,'' {\em Epidemics} {\bfseries 10} (2015) 21--25.

\bibitem[GGM15]{greenhalgh2015disease}
S.~Greenhalgh, A.~P. Galvani, and J.~Medlock, ``Disease elimination and
  re-emergence in differential-equation models,'' {\em Journal of theoretical
  biology} {\bfseries 387} (2015) 174--180.

\bibitem[HMNK19]{hunter2019correction}
E.~Hunter, B.~Mac~Namee, and J.~Kelleher, ``Correction: An open-data-driven
  agent-based model to simulate infectious disease outbreaks,'' {\em PloS one}
  {\bfseries 14} no.~1, (2019) e0211245.

\bibitem[IAC{\etalchar{+}}15]{imai2015time}
C.~Imai, B.~Armstrong, Z.~Chalabi, P.~Mangtani, and M.~Hashizume, ``Time series
  regression model for infectious disease and weather,'' {\em Environmental
  research} {\bfseries 142} (2015) 319--327.

\bibitem[KL82]{density}
P.~M. Kogut and P.~Leath, ``High-density site percolation on real lattices,''
  {\em Journal of Physics C: Solid State Physics} {\bfseries 15} no.~20, (1982)
  4225.

\bibitem[KM27]{kermack1927contribution}
W.~O. Kermack and A.~G. McKendrick, ``A contribution to the mathematical theory
  of epidemics,'' {\em Proceedings of the royal society of london. Series A,
  Containing papers of a mathematical and physical character} {\bfseries 115}
  no.~772, (1927) 700--721.

\bibitem[Mor09]{Morris}
R.~Morris, ``Minimal percolating sets in bootstrap percolation,'' {\em the
  electronic journal of combinatorics} {\bfseries 16} no.~1, (2009) R2.

\bibitem[PTM17]{pipatsart2017stochastic}
N.~Pipatsart, W.~Triampo, and C.~Modchang, ``Stochastic models of emerging
  infectious disease transmission on adaptive random networks,'' {\em
  Computational and mathematical methods in medicine} {\bfseries 2017} (2017) .

\bibitem[Rie12]{percset}
E.~Riedl, ``Largest and smallest minimal percolating sets in trees,'' {\em the
  electronic journal of combinatorics} {\bfseries 19} no.~1, (2012) 64.

\bibitem[Ros11]{ross}
R.~Ross, ``The prevention of malaria (with addendum),''(1911) .

\bibitem[SN13]{dengue}
S.~Side and M.~S.~M. Noorani, ``A sir model for spread of dengue fever disease
  (simulation for south sulawesi, indonesia and selangor, malaysia),'' {\em
  World Journal of Modelling and Simulation} {\bfseries 9} no.~2, (2013)
  96--105.

\bibitem[SR13]{modeling}
C.~I. Siettos and L.~Russo, ``Mathematical modeling of infectious disease
  dynamics,'' {\em Virulence} {\bfseries 4} no.~4, (2013) 295--306.

\bibitem[VRMP10]{viral}
N.~K. Vaidya, R.~M. Ribeiro, C.~J. Miller, and A.~S. Perelson, ``Viral dynamics
  during primary simian immunodeficiency virus infection: effect of
  time-dependent virus infectivity,'' {\em Journal of virology} {\bfseries 84}
  no.~9, (2010) 4302--4310.

\bibitem[WDRS07]{automata}
S.~H. White, A.~M. Del~Rey, and G.~R. S{\'a}nchez, ``Modeling epidemics using
  cellular automata,'' {\em Applied Mathematics and Computation} {\bfseries
  186} no.~1, (2007) 193--202.

\bibitem[WEV{\etalchar{+}}09]{watkins2009disease}
R.~E. Watkins, S.~Eagleson, B.~Veenendaal, G.~Wright, and A.~J. Plant,
  ``Disease surveillance using a hidden markov model,'' {\em BMC medical
  informatics and decision making} {\bfseries 9} no.~1, (2009) 39.

\end{thebibliography}
\bibliographystyle{fredrickson}

%
%
%
%



\end{document}